\DeclareSymbolFont{largesymbolsA}{U}{txexa}{m}{n}
\DeclareMathSymbol{\varprod}{\mathop}{largesymbolsA}{16}
\DeclareMathAlphabet\mathbfcal{OMS}{cmsy}{b}{n}
\newcommand{\set}[2]{\left\{#1 \; \left|\;\; #2 \right.\right\}}
\newcommand{\tchak}{\vdash}
\newcommand{\exact}{\textbf{exact}}
\newcommand{\hcenter}{\textbf{center}}
\newcommand{\cons}{\textbf{cons}}
\newcommand{\worst}{\textbf{worst}}
\newcommand{\avg}{\textbf{avg}}
\newcommand{\compact}{\textbf{compact}}
\newcommand{\Metric}{\mathcal{M}}
\newcommand{\w}{{w}}
\newcommand{\C}{\mathcal{C}}
\newcommand{\GG}{\mathbb{G}}
\newcommand{\VV}{\mathbb{V}}
\newcommand{\EE}{\mathbb{E}}
\newcommand{\R}{\mathbb{R}}
\newcommand{\Edir}{\Vec{\EE}}
\newcommand{\U}{\mathcal{U}}
\newcommand{\W}{\mathcal{W}^p}
\newcommand{\WW}{\mathcal{W}}
\newcommand{\UU}{\mathcal{U}}
\newcommand{\tUU}{\widetilde{\mathcal{U}}}
\newcommand{\tomega}{\tilde{\omega}}
\newcommand{\tx}{\tilde{x}}
\newcommand{\tmu}{\tilde{\mu}}
\newcommand{\tu}{\tilde{u}}
\newcommand{\NP}{\mathcal{NP}}
\newcommand{\X}{\mathcal{X}}
\newcommand{\grandO}{\mathcal{O}}
\renewcommand{\u}{u}
\renewcommand{\v}{v}
\newcommand{\bc}{\beta}
\newcommand{\dd}{\hat{d}}
\newcommand{\up}{{u}^+}
\newcommand{\un}{{u}^-}
\newcommand{\nU}[1]{{\sigma_{#1}}}
\newcommand{\NU}{\sigma}
\newcommand{\diaminst}{\Delta}
\newcommand{\roots}{R}
\renewcommand{\root}{r}
\newcommand{\dmax}{d^{max}}
\newcommand{\cost}{c}
\newcommand{\card}[1]{\left\lvert #1\right\rvert}
\newcommand{\distany}[2]{d(#1,#2)}
\newcommand{\dist}[2]{\distany{#1}{#2}}
\DeclareMathOperator*{\conv}{conv}
\renewcommand{\mp}[1]{{\color{brown}mp : #1}}
\newcommand{\SP}{\textsc{sp}\xspace}
\newcommand{\MST}{\textsc{mst}\xspace}
\newcommand{\PBFAMILY}{\ensuremath{\mathcal{S}}\xspace}
\newcommand{\EVALC}{\ensuremath{\textsc{adversarial}}\xspace}
\newcommand{\MAXCUT}{\ensuremath{\textsc{max-cut}}\xspace}
\newcommand{\LISTCOL}{\ensuremath{\textsc{list-col}}\xspace}
\newcommand{\ROBUST}[1]{\ensuremath{\textsc{LocRob}\mbox{\footnotesize-#1}}\xspace}
\newcommand{\CONSERVATIVE}[1]{\ensuremath{\textsc{cons}\mbox{\footnotesize-#1}}\xspace}
\newcommand{\OPT}{\mbox{\textsc{opt}}}
\newcommand{\NPH}{$\cal{NP}$-hard\xspace}
\newcommand{\PEQNP}{$\cal{P}=\cal{NP}$\xspace}
\newcommand{\PTAS}{$\cal{PTAS}$\xspace}
\newcommand{\WONEH}{${\cal W}[1]$-hard\xspace}
\newcommand{\WONEHness}{${\cal W}[1]$-hardness\xspace}
\newcommand{\FPT}{$\cal{FPT}$\xspace}
\DeclareMathOperator*{\argmin}{arg\,min}
\newtheorem{theorem}{Theorem}
\newtheorem{example}{Example}
\newtheorem{lemma}{Lemma}
\newtheorem{proposition}{Proposition}
\newtheorem{remark}{Remark}
\newcommand{\blue}[1]{{\color{black}#1}}
\begin{document}

\begin{center}

{\LARGE Optimization problems in graphs with locational uncertainty}\\[12pt]

\footnotesize

\mbox{\large Marin Bougeret}\\ LIRMM, University of Montpellier, CNRS, France, \mbox{marin.bougeret@lirmm.fr}\\[6pt]

\mbox{\large J\'er\'emy Omer}\\ IRMAR, INSA de Rennes, Rennes, France, \mbox{jeremy.omer@insa-rennes.fr}\\[6pt]

\mbox{\large Michael Poss}\\ LIRMM, University of Montpellier, CNRS, France, \mbox{michael.poss@lirmm.fr}\\[6pt]

\normalsize

\end{center}


\noindent
Many discrete optimization problems amount to selecting a feasible \blue{set of edges} of least weight. We consider in this paper the context of spatial graphs where the positions of the vertices are uncertain and belong to known uncertainty sets. The objective is to minimize the sum of the distances of the chosen \blue{set of edges} for the worst positions of the vertices in their uncertainty sets. We first prove that these problems are $\cal NP$-hard even when the feasible sets consist either of all spanning trees or of all $s-t$ paths. Given this hardness, we propose an exact solution algorithm combining integer programming formulations with a cutting plane algorithm, identifying the cases where the separation problem can be solved efficiently. We also propose a conservative approximation and show its equivalence to the affine decision rule approximation in the context of Euclidean distances. We compare our algorithms to three deterministic reformulations on instances inspired by the scientific literature for the Steiner tree problem and a facility location problem.

\bigskip

\noindent {\it Key words:} combinatorial optimization, robust optimization, $\cal{NP}$-hardness, cutting plane algorithms, dynamic programming

\noindent\hrulefill

\baselineskip 20pt plus .3pt minus .1pt


%


\section{Introduction}
Research in combinatorial optimization has provided efficient algorithms to solve many complex discrete decision problems, providing exact or near-optimal solutions in reasonable amounts of time. The applications are countless, ranging from logistics (network design, facility location, $\ldots$) to scheduling. \blue{In this paper, we are interested in the class $\PBFAMILY$ of deterministic combinatorial optimization problems that amount to selecting a feasible set of edges in a given graph $\GG=(\VV,\EE)$ and that minimizes the sum of edge-weights. Any $\Pi\in \PBFAMILY$ represents a specific problem, such as the shortest path or the minimum spanning tree problem. We consider further that $\GG$ is a spatial graph embedded into a given metric space $(\Metric, d)$. Each vertex~$i$ is assigned a position~$u_i\in\Metric$ so the weight of each edge $\{i,j\}$ is given by its distance $d(u_i,u_j)$. Denoting by $\X\subseteq\{0,1\}^{|\EE|}$ the set of feasible vectors for a given instance, any $\Pi\in\PBFAMILY$ corresponds to a combinatorial optimization problem of the form
\begin{equation}
\tag{$\Pi$}
\label{eq:CO}
 \min_{x\in \X}\sum_{\{i,j\}\in \EE} x_{ij }d(u_i,u_j).
\end{equation}
}
Problem~\ref{eq:CO} encompasses many applications, such as network design and facility location. 
These are typically subject to data uncertainty, be it because of the duration of the decision process, measurement errors, or simply lack of information. 
One successful framework that has emerged to address uncertainty is robust optimization~\citep{ben1998robust}, where the uncertain parameters are modeled with convex sets such as polytopes, or with finite sets of points.
\blue{Many authors have focused more particularly on robust discrete optimization problems see~\cite{BertsimasS03,BuchheimK18,kasperski2016robust,kouvelis2013robust} and the references therein.} 
We enter this framework by considering the model where the positions of the vertices are subject to uncertainty, therefore impacting the distances among the vertices. The resulting problem thus seeks to find \blue{a} feasible set of edges that minimizes its worst-case sum of distances. Formally, we introduce for each vertex $i\in \VV$ the set of possible locations as the uncertainty set $\U_i
\subseteq \Metric$ of cardinality $\nU{i}=|\U_i|$. \blue{We consider that there is no correlation between the positions of the different vertices, so a scenario is given by the tuple $\u=(u_1,\ldots,u_{|\VV|})$ which belongs to the set $\UU=\times_{i\in \VV}\U_i$. Then, given $\Pi\in\PBFAMILY$, we study in this paper the \emph{locational robust counterpart} of problem $\Pi$, formally defined as
\begin{equation}
\tag{\ROBUST{$\Pi$}}
\label{eq:minmaxCO}
\min_{x\in\X} \max_{\u\in\UU}\sum_{\{i,j\}\in \EE}x_{ij} d(u_i,u_j).
\end{equation}
}
We also devote a particular attention to evaluating the objective function of~\ref{eq:minmaxCO}, \blue{often called the adversarial problem}
\blue{
\begin{equation}
\tag{\EVALC}
\label{eq:evalc}
\max_{\u\in\UU}\sum_{\{i,j\}\in \EE}x_{ij} d(u_i,u_j).
\end{equation}
We underline that we focus throughout on finite uncertainty sets. However, our setting encompasses polyhedral uncertainty sets whenever the distance function is convex.
\begin{remark}
\label{rem:convex}
  Suppose that $\Metric\subseteq \R^p$ for some $p>0$ and that $d$ is a convex function. Then, maximizing over $\UU$ is equivalent to maximizing over the polytope $\conv(\UU)=\times_{i\in \VV}\conv(\U_i)$. Hence, in that case our setting covers polyhedral uncertainty sets.
\end{remark}
}
As an illustration, the following two applications fall into the context of problem~\ref{eq:minmaxCO}.
\begin{example}[Subway network design] Designing and expanding a subway network forms an important optimization problem faced by large cities. The new lines should efficiently cover dense city areas while interacting well with the existing transportation lines. A key aspect of this problem amounts to locating the new subway stations. In addition to the technical considerations inherent to any construction, these also involve political considerations as local officials are never happy to let their citizens face the inconvenience of heavy civil engineering. This political lever is particularly complex in cities like Brussels having multiple local governments that must all agree before the stations can finally be constructed. As the overall process takes years, facing local government changes, the exact locations of the metro stations typically evolve between the first draft and the final implementation. Now, the exact locations of the stations impact the lengths of the resulting lines, the construction cost of which is typically proportional to their lengths~\citep{GUTIERREZJARPA20133000}. The cost of digging the new lines can therefore be modeled as a network design problem with locational uncertainty on the position of the vertices, usually including additional technical and environmental constraints. 
\end{example}
\begin{example}[Strategic facility location]
\label{ex:UFL}
A production company wishes to expand its activities in a new region, locating additional facilities. We consider the strategic level where the company may only choose  approximate locations, as the exact locations will be known later, after all technical and legal considerations have been studied. \blue{We consider a one-stage location problem where the selection of facilities and assignment of customers are decided at the same time. This is relevant, for instance, when clients may need different types of products that require different installations at the facilities.} 
As always in such facility location problems, the distances between the future clients and facilities lead to significant transportation costs that need to be kept as low as possible. In this particular case, these distances depend on locations that are uncertain at the time planning decisions are made. 
Importantly, the distances are provided by the underlying road network~\citep{MELKOTE2001515}, which yields a graph-induced metric $(\Metric,d)$. \blue{More generally, this framework is also relevant for any application of the $p$-median problem~\citep{Marin2019} with locational uncertainty.}
\end{example}

Traditionally, robust optimization problems with an objective function that is concave in the uncertain parameters are reformulated as compact models using conic duality~\citep{ben1998robust}. These techniques do not readily extend to function $d(u_i,u_j)$ as the latter is non-concave in general. Actually, for Euclidean metric spaces based on the vector space $\R^\ell, \ell\in\mathbb{Z}^+$, $d(u_i,u_j)=\|u_i-u_j\|_2$ is convex in $u_i$ and $u_j$. Function $\|u_i-u_j\|_2$ is closely related to the second-order cone (SOC) constraints considered by~\cite{zhen2021robust} for robust problems with polyhedral uncertainty sets. \cite{zhen2021robust} linearize such robust SOC constraints by introducing adjustable variables, turning the problem into an adjustable robust optimization problem that can be tackled exactly~\citep{AyoubP16,ZhenHS18,zeng2013solving} or approximately using affine decision rules~\citep{Ben-TalGGN04} or finite adaptability approaches~\citep{BertsimasD16,hanasusanto2015k,postek2016multistage,subramanyam2019k}, among others. 

Interestingly, the approach of~\cite{zhen2021robust}, extended in~\cite{roos2018approximation} to more general convex functions, makes no particular assumption on the feasibility set of the decision variables $\X$. A second work closely related to~\ref{eq:minmaxCO} is that of~\cite{citovsky2017tsp}, who rely on computational geometry techniques to provide constant-factor approximation algorithms in the special case where $\X$ contains all Hamiltonian cycles of $\GG$. They propose in particular to solve a deterministic counterpart of~\ref{eq:minmaxCO} where the uncertain distances are replaced by the maximum pairwise distances $\dmax_{ij}=\max_{u_i\in \U_i, u_j\in \U_j} \dist{u_i}{u_j}$, for each $(i,j)\in \VV^2, i\neq j$.


To summarize, we see that while~\cite{zhen2021robust} provide valuable tools for addressing problems defined in Euclidean metric spaces considering uncertainty polytopes, their approaches cannot be used for graph-induced metric spaces, such as those mentioned in Example~\ref{ex:UFL}. 
On the other hand,~\cite{citovsky2017tsp} focused on the case where $\X$ contains all Hamiltonian cycles of $\GG$. 
The main purpose of the present paper is thus to provide a more general solution algorithm that is valid for any set $\X$ and metric space $(\Metric,d)$. 
We only assume that $\UU$ is finite, encompassing the two cases mentioned above. \blue{Specifically, polyhedral uncertainty in Euclidean metric spaces is already discussed in Remark~\ref{rem:convex}.} Then, in the case of graph-induced metrics, the set $\Metric$ is the set of nodes of a finite graph, meaning that each $\U_i\subseteq \Metric$ must be finite as well.

\blue{Let us denote by $G(x)=(V(x),E(x))$ the subgraph induced by $x$, where $E(x)=\set{\{i,j\}\in \EE}{x_{ij}=1}$ and $V(x)=\set{i\in \VV}{\exists e \in E(x):i\in e}$}.
In this context, we can summarize our contributions as follows:
\begin{itemize}
\item We prove that ~\ref{eq:minmaxCO} is \NPH even when \blue{$\X$} consists of all $s-t$ paths and $(\Metric,d)$ is the one-dimensional Euclidean metric space or when \blue{$\X$} consists of all spanning trees of $\GG$. These results illustrate how the nature of~\ref{eq:minmaxCO} fundamentally differs from the classical min-max robust problem with cost uncertainty, which is known to be polynomially solvable whenever the costs lie in independent uncertainty sets~\citep{AissiBV09}.
 
\item We provide a general cutting-plane algorithm for~\ref{eq:minmaxCO}. We further show that problem \ref{eq:evalc} is \NPH and provide two algorithms for computing \ref{eq:evalc}. One is based on integer programming formulations while the other one relies on a dynamic programming algorithm that involves the threewidth of \blue{$G(x)$}.

\item We leverage the above dynamic programming to provide a compact formulation for the problem when any \blue{$G(x)$} contains only stars (or unions of stars). We can, in theory, extend that idea to trees, albeit presenting poor numerical performance.

\item We propose a conservative approximation of the problem that uncouples $\UU$ into its projections $\U_i$, $i\in \VV$. In the case of Euclidean metric spaces, this approximation leads to mixed-integer second-order conic reformulations, and turns out to be equivalent to the affine decision rule reformulation proposed by~\cite{zhen2021robust}.
 
\item We compare the exact cutting plane algorithm numerically with the above conservative approximation and simple deterministic reformulations. The benchmark is composed of two families of instances. 
The first family includes Steiner tree instances that illustrate subway network design. 
The second one is composed of strategic facility location instances.
\end{itemize}

The rest of the paper is structured as follows. Section~\ref{sec:NPhard} studies the hardness of~\ref{eq:evalc} and~\ref{eq:minmaxCO}. In Section~\ref{sec:exact}, we develop the exact solution algorithm. 
The latter involves a dynamic programming algorithm for trees, generalized to graphs of bounded threewidth in the appendix. Section~\ref{sec:adr} details the conservative reformulation.
In Section~\ref{sec:num}, we present our numerical experiments. The appendix details the extension of the dynamic programming to graphs with bounded threewidth (Appendices~\ref{app:fptandtw} and~\ref{app:tw2}), the details of the compact formulations (Appendix~\ref{app:compactSTP}) for trees, and the equivalence between our conservative reformulation and that of~\cite{zhen2021robust} (Appendix~\ref{app:adr}).

\section{Hardness results}
\label{sec:NPhard}

We study in this section the complexity of the optimization problems \ref{eq:minmaxCO} and \ref{eq:evalc}. 

\subsection{Problem \textsc{robust}-$\Pi$}

\blue{Let us start by observing that \ref{eq:minmaxCO} is not harder than its nominal counterpart \ref{eq:CO} whenever the edges indexed by each $x\in\X$ are disjoint (do not have common endpoints). In that case, we have
$$
\max_{\u\in\UU}\sum_{\{i,j\}\in \EE}x_{ij} d(u_i,u_j) = \sum_{\{i,j\}\in \EE} \max_{u_i\in \U_i,u_j\in \U_j} x_{ij}\dist{u_i}{u_j}=\sum_{\{i,j\}\in \EE}x_{ij}\dmax_{ij}.
$$
Hence, solving \ref{eq:minmaxCO} in the above setting amounts to solve
$$
 \min_{x\in \X}\sum_{\{i,j\}\in \EE} x_{ij} \dmax_{ij}.
$$
As an example, consider that \ref{eq:CO} is the assignment problem so $\GG$ is a bi-partite graph based on the partition of $\VV$ into the two sets $\VV^1$ and $\VV^2$ of equal size, and any $x\in\X$ selects $|\VV^1|$ edges that cover all nodes. Problem \ref{eq:CO} being solvable in polynomial time, so is \ref{eq:minmaxCO}.}

In spite of this easy example, we show in this section that \ref{eq:minmaxCO} can in general not be reduced to \ref{eq:CO} as \ref{eq:minmaxCO} is typically harder than \ref{eq:CO}. We illustrate the hardness of \ref{eq:minmaxCO} by focusing on the two well-known polynomially solvable problems \ref{eq:CO}, namely the shortest path problem \SP and the minimum spanning tree problem \MST. These problems have been largely studied in the robust combinatorial optimization literature under cost uncertainty (e.g.~\cite{kasperski2009approximability,YamanKP01}), \blue{which addresses problems of the type
\begin{equation}
\label{eq:minmax}
\min_{x\in \X} \max_{c\in\C}\sum_{\{i,j\}\in \EE} x_{ij}c_{ij},
\end{equation}
where $\C$} is a given uncertainty set included in the positive orthant. \blue{Comparing~\ref{eq:minmax} with~\ref{eq:minmaxCO} underlines that the difficulty of~\ref{eq:minmaxCO} lies in the non-linearity of the distance function $d$.}
 It is folklore (e.g., \cite{AissiBV09}) that when $\C$ is the Cartesian product of intervals, $\C=\times_{e \in \EE}[\underline{c}_e,\overline{c}_e]$ for $\underline{c}_e \leq \overline{c}_e$, problem~\ref{eq:minmax} can be reformulated as $\min_{x\in \X} \sum_{\{i,j\}\in \EE} x_{ij}\overline{c}_{ij}$ making the robust problems as easy as their nominal counterparts. Our first result below shows that such is not the case for \ref{eq:minmaxCO}, as \SP turns \NPH even in the simple case where each $\U_i$ is a subset of $\R$. Notice that in the 1-dimensional Euclidean space, the convexity of $d(u_i,u_j)=| u_i - u_j|$ implies that $\U_i$ is equivalent to the set $\{\underline{u}_i,\overline{u}_i\}$, for some $\underline{u}_i\leq \overline{u}_i$, \blue{in line with Remark~\ref{rem:convex}.}

\begin{proposition}
  \label{prop:SPNPhard}
  \ROBUST{SP} is \NPH even when $(\Metric,d)$ is the 1-dimensional Euclidean space.
\end{proposition}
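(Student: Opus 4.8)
The plan is to reduce a classical weakly \NPH problem---most naturally \textsc{partition}---to \ROBUST{\DELTASP} on the $1$-dimensional Euclidean line. First I would record the structure of the adversary's problem for a \emph{fixed} $s$--$t$ path $P=(s=w_0,w_1,\dots,w_k=t)$. Since on the line $\dist{u_i}{u_j}=|u_i-u_j|$ is convex, each $\U_i$ may be taken as a segment $[\underline u_i,\overline u_i]$ and the worst position of every vertex lies at an endpoint of its segment; hence $c(P)=\max_{x_{w_h}\in\{\underline u_{w_h},\overline u_{w_h}\}}\sum_{h=0}^{k-1}|x_{w_h}-x_{w_{h+1}}|$ is a maximum-total-variation problem along $P$, itself polynomial by dynamic programming. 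This shows that the hardness can only stem from the \emph{choice} of the path, and it isolates what the gadgets must control: the pattern of endpoint choices (the adversary's ``zig-zag'') induced by the selected path.

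Next I would build a series (layered) graph with shared \emph{anchor} vertices $s=w_0,w_1,\dots,w_n=t$, one layer per item $a_i$. Layer $i$ offers two internal routes between $w_{i-1}$ and $w_i$, say $w_{i-1}-v_i^{0}-w_i$ and $w_{i-1}-v_i^{1}-w_i$, so that an $s$--$t$ path selects, for every item, one of two options, thereby encoding a candidate subset $A\subseteq[n]$. The crucial design point is that the anchors $w_i$ themselves carry non-degenerate segments: were they fixed points, each layer's contribution to the total variation would decouple and the instance would collapse to an ordinary shortest path solvable in polynomial time. With moving anchors, the adversary's optimal endpoint pattern couples consecutive layers, and I would calibrate the segments $[\underline u_{v_i^{0}},\overline u_{v_i^{0}}]$, $[\underline u_{v_i^{1}},\overline u_{v_i^{1}}]$ together with the anchor segments so that the worst-case total variation of any route reads off $\sum_{i\in A}a_i$ (up to a fixed offset) against the complementary sum $\sum_{i\notin A}a_i$.

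Finally I would fix a threshold $K$ and prove that some $s$--$t$ path satisfies $c(P)\le K$ if and only if the \textsc{partition} instance admits a subset summing to $S$. The forward direction is immediate from a balanced subset; the reverse direction is where the work concentrates, since one must rule out any ``unbalanced'' route beating the threshold. I expect the main obstacle to be exactly the analysis of the adversary's coupled optimum: because the maximum total variation depends globally on the up/down placement of all shared anchors, I must show that the optimal zig-zag pattern evaluates the intended quantity for every route, and that the path player cannot exploit the coupling to manufacture a spuriously small worst case. Establishing both inequalities---by characterizing the optimal endpoint assignment layer by layer and bounding the cross-layer interactions---forms the technical heart of the argument; once this is done, the polynomial size of the construction and the equivalence yield the claimed \NPH.
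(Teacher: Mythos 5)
Your high-level strategy matches the paper's: a reduction from \textsc{partition} to a layered graph on the line in which an $s$--$t$ path encodes a subset $S\subseteq[n]$, and the adversary's worst case is driven to an alternating (``zig-zag'') endpoint pattern. Your preliminary observation that, for a fixed path, the adversary's problem is a max-total-variation computation with optima at interval endpoints is also correct and consistent with the paper. However, what you explicitly defer as ``the technical heart'' is precisely the idea the proof cannot do without, and you never supply it: the paper places \emph{every} interval at a common large offset $K$ from the fixed endpoints $s,t$ (taking $\U_i=[-\un_i,\up_i]$ with $\un_i,\up_i\in[K-A,K+A]$ and $K>2nA$). This has two consequences that your sketch leaves unestablished. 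First, any endpoint assignment that fails to alternate top/bottom along the path loses a gap of order $K$ that cannot be recouped by the $O(A)$ perturbations, so the adversary's maximum is attained by one of exactly \emph{two} scenarios (start up or start down); this is what tames the ``coupled optimum'' you worry about. Second, the interval endpoints must be defined with alternating parity across layers ($\up_{v_i}=K+a_i$ for odd $i$ but $\un_{v_i}=K+a_i$ for even $i$, etc.) so that the two surviving scenarios accumulate the item weights consistently, giving $c(F_S)=2nK+2\max\bigl(\sum_{i\in S}a_i,\sum_{i\in\overline S}a_i\bigr)$ and hence the threshold equivalence. Without a concrete calibration and a proof that the adversary's optimum collapses to these two scenarios, the reverse direction of your reduction (ruling out a spuriously small worst case) is not established.

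A secondary structural difference: the paper uses no shared anchor vertices. Consecutive layers $\{v_i,w_i\}$ and $\{v_{i+1},w_{i+1}\}$ are joined by all four edges, so a path simply picks one vertex per layer; $s$ and $t$ are the only fixed points. Your anchors $w_0,\dots,w_n$ with non-degenerate segments add their own contributions to the total variation and an extra layer of coupling that you would then also have to control. This is not fatal, but it complicates the very analysis you have not yet carried out; the paper's anchor-free layering combined with the large-$K$ offset is what makes the case analysis reduce to two scenarios and the calibration straightforward.
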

\begin{proof}
 Given a set of integers $\{a_1,\ldots,a_n\}$, with $A=\sum_{i=1}^n a_i$, the $\cal{NP}$-complete decision problem partition asks for a subset $S\subset\{1,\ldots,n\}$ such that $\sum_{i\in S} a_i = A/2$. Let $K>0$ be a large enough integer. The reduction considers the graph $\GG$ with $2n+2$ vertices and $4n$ edges as illustrated Figure~\ref{fig:SPNPhardSeg}; the regions $\U_i$ are translated away from vertex $o$ for visibility. Specifically, our reduction locates vertices $s$ and $t$ at $0$ while \blue{$\U_i=\{-\un_i,\up_i\}$} for each vertex $i$ different from $s$ and $t$. The definition of $\up$ and $\un$ alternates along the vertices $v_i,v_{i+1},v_{i+2},\ldots$ and similarly for vertices $w_i$: for each $i=2k+1$, we define $\up_{v_i}=K+a_i$, $\un_{v_i}=K+\frac{A}{n}-a_i$, $\up_{w_i}=K$ and $\un_{w_i}=K+\frac{A}{n}$, while for each $i=2k$, we define $\up_{v_i}=K+\frac{A}{n}-a_i$, $\un_{v_i}=K+a_i$, $\up_{w_i}=K+\frac{A}{n}$ and $\un_{w_i}=K$.

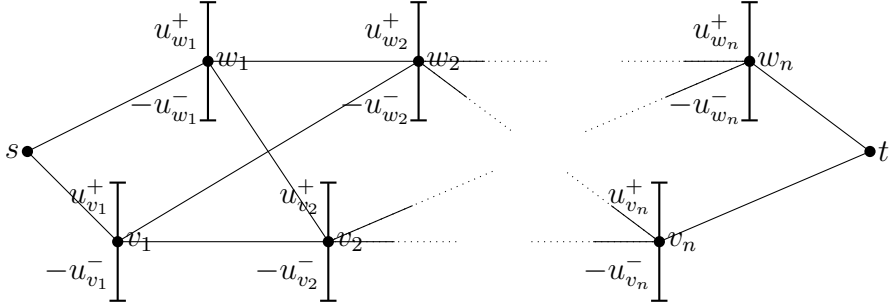
\begin{figure}[!h]
\centering
\scriptsize
\begin{tikzpicture}[scale=0.8]
\tkzDefPoint(0,0){o}
\tkzLabelPoint[left](o){$s$}
\tkzDefPoint(14,0){d}
\tkzLabelPoint[right](d){$t$}

\tkzDefPoint(1.5,-1.5){v1}
\tkzLabelPoint[right](v1){$v_1$}
\tkzDefPoint(1.5,-2){v1n}
\tkzLabelPoint[left](v1n){$-\un_{v_1}$}
\tkzDefPoint(1.5,-0.75){v1u}
\tkzLabelPoint[left](v1u){$\up_{v_1}$}
\draw[thick,{|-|}] (1.5,-2.5) -- (1.5,-0.5);

\tkzDefPoint(3,1.5){w1}
\tkzLabelPoint[right](w1){$w_1$}
\tkzDefPoint(3,0.75){w1n}
\tkzLabelPoint[left](w1n){$-\un_{w_1}$}
\tkzDefPoint(3,2){w1u}
\tkzLabelPoint[left](w1u){$\up_{w_1}$}
\draw[thick,{|-|}] (3,0.5) -- (3,2.5);

\tkzDefPoint(5,-1.5){v2}
\tkzLabelPoint[right](v2){$v_2$}
\tkzDefPoint(5,-2){v2n}
\tkzLabelPoint[left](v2n){$-\un_{v_2}$}
\tkzDefPoint(5,-0.75){v2u}
\tkzLabelPoint[left](v2u){$\up_{v_2}$}
\draw[thick,{|-|}] (5,-2.5) -- (5,-0.5);

\tkzDefPoint(6.5,1.5){w2}
\tkzLabelPoint[right](w2){$w_2$}
\tkzDefPoint(6.5,0.75){w2n}
\tkzLabelPoint[left](w2n){$-\un_{w_2}$}
\tkzDefPoint(6.5,2){w2u}
\tkzLabelPoint[left](w2u){$\up_{w_2}$}
\draw[thick,{|-|}] (6.5,0.5) -- (6.5,2.5);

\tkzDefPoint(10.5,-1.5){vn}
\tkzLabelPoint[right](vn){$v_n$}
\tkzDefPoint(10.5,-2){vnn}
\tkzLabelPoint[left](vnn){$-\un_{v_n}$}
\tkzDefPoint(10.5,-0.75){vnu}
\tkzLabelPoint[left](vnu){$\up_{v_n}$}
\draw[thick,{|-|}] (10.5,-2.5) -- (10.5,-0.5);

\tkzDefPoint(12,1.5){wn}
\tkzLabelPoint[right](wn){$w_n$}
\tkzDefPoint(12,0.75){wnn}
\tkzLabelPoint[left](wnn){$-\un_{w_n}$}
\tkzDefPoint(12,2){wnu}
\tkzLabelPoint[left](wnu){$\up_{w_n}$}
\draw[thick,{|-|}] (12,0.5) -- (12,2.5);

\foreach \n in {o,d,v1,w1,v2,w2,wn,vn}
  \node at (\n)[circle,fill,inner sep=1.5pt]{};

\foreach \v/\w in {o/w1, o/v1, w1/w2, w1/v2, v1/w2, v1/v2, wn/d, vn/d}
    \draw (\v) to (\w);

\foreach \v/\w in {v2/vn, v2/wn, w2/vn, w2/wn, vn/v2, vn/w2, wn/v2, wn/w2}
    \draw (\v) edge ($(\v)!0.2!(\w)$) edge [dotted] ($(\v)!0.4!(\w)$);

\end{tikzpicture}
\caption{\label{fig:SPNPhardSeg} Reduction from partition when $\UU$ is a Cartesian product of segments.}
\end{figure}

We first show that for $K$ large enough, the worst-case $\u\in\UU$ for any path $P$ from $s$ to $t$ alternates from the top of an interval to the bottom of the subsequent interval along the path. To prove this, notice that for any vertex $v\in \VV \setminus\{s,t\}$, $\un_v\in[K-A,K+A]$ and $\up_v\in[K-A,K+A]$ and the same holds for any vertex $w$. Hence, if $\u$ alternates for the entire path, the resulting cost is not smaller than $c=2n(K-A)$. On the contrary, if $\u$ misses one alternation, its cost cannot be greater than $c'=2(n-1)(K+A)+2A$. Hence, taking $K > 2nA$ ensures $c > c'$. 

The reduction works as follow. Let $S\subseteq\{1,\ldots,n\}$ be a subset of integers and $\overline{S}$ its complement. We associate to $S$ the path $P_S$ from $s$ to $t$ that contains $v_i$ for each $i\in S$ and $w_i$ for each $i\in\overline{S}$. From the above, only two scenarios in $\UU$ must be considered in the worst-case and each vertex $i\in \{w_1,v_1,\ldots,w_n,v_n\}$ contributes to the total length with either $2\up_i$ or $2\un_i$, depending on the scenario considered. We have
\begin{align*}
c(P_S) &= \max_{\u\in\UU}\sum_{\{i,j\}\in P_S} \|u_i-u_j\|_2 \\
&=2\max\left(nK+\sum_{i\in S}a_i,n(K+\frac{A}{n})-\sum_{i\in S}a_i\right) \\
&=2\max\left(nK+\sum_{i\in S}a_i,A + nK-\sum_{i\in S}a_i\right) \\
&=2nK + 2\max\left(\sum_{i\in S}a_i,\sum_{i\in \overline{S}}a_i\right).
\end{align*}
Hence, there exists a path $P_S$ in $\X$ with minimum cost of $2nK+A$ if and only if there exists a set $S$ such that $\sum_{i\in S} a_i = \sum_{i\in \overline{S}} a_i=A/2$.
\end{proof}

For \MST we can prove the hardness of the problem only for a more general metric space.
\begin{proposition}
\label{prop:MSThard}
\ROBUST{MST}  is \NPH.
\end{proposition}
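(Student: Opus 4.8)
The plan is to prove that \ROBUST{\DELTAMST} is \NPH by a polynomial reduction from a suitable $\mathcal{NP}$-complete problem, mapping a source instance to an instance of \ROBUST{\DELTAMST} together with a threshold $B$ so that a spanning tree $T$ with $c(T)\le B$ exists if and only if the source instance is a yes-instance. The first thing I would record is a structural observation that shapes the whole construction: since every feasible $F$ is a spanning \emph{tree}, the inner maximization $c(T)=\max_{\u\in\UU}\sum_{\{i,j\}\in T}\dist{u_i}{u_j}$ has a tree-structured objective and is therefore computable in polynomial time by dynamic programming over $T$ (the treewidth-$1$ case of the algorithm of Section~\ref{sec:exact}). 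Hence any hardness must come entirely from the outer choice of the tree, not from an intractable adversary, and the reduction must be designed so that the efficiently-evaluable quantity $c(T)$ nonetheless encodes an \NPH property of $T$.

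A second guiding remark is that I must avoid the ``easy regime'' exhibited for the assignment problem at the start of Section~\ref{sec:NPhard}: if, for every edge, the worst-case distance $\dmax_{ij}$ were attained simultaneously, then $c(T)=\sum_{\{i,j\}\in T}\dmax_{ij}$ and \ROBUST{\DELTAMST} would reduce to an ordinary minimum spanning tree with weights $\dmax$, which is polynomial. The construction must therefore prevent the adversary from independently maximizing all edges at once. I would do this with two-point uncertainty sets $\U_i=\{a_i,b_i\}$ and distances engineered so that, on each relevant edge, the largest distance is obtained when the two endpoints select \emph{matching} positions; the adversary then faces a genuine (tree-tractable but nontrivial) ``agreement'' optimization, and the value $c(T)$ depends on how the tree's edges propagate these agreement constraints. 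The metric itself I would realize as a \emph{graph-induced} metric: defining all distances as shortest-path lengths in an auxiliary weighted graph makes the triangle inequality hold automatically and, crucially, lets me realize the required pattern of ``matching-large / mismatching-small'' distances among many vertices at once --- a pattern that cannot be produced on a line, which is exactly why the result is stated for a general metric space rather than the $1$-dimensional Euclidean one of Proposition~\ref{prop:SPNPhard}.

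With these ingredients, I would assemble gadgets encoding the source problem (for instance, variable gadgets forcing a binary choice that the spanning tree is obliged to make, and clause/constraint gadgets whose worst-case cost stays below the local budget only when they are connected through a ``satisfying'' choice, and is driven above it by the adversary otherwise), wire them into a connected host graph so that every spanning tree induces a consistent assignment plus some routing, and tune the edge weights so that the global threshold $B$ separates satisfiable from unsatisfiable configurations. The two directions are then: from a yes-instance, exhibit a tree whose agreement structure caps the adversary at $B$; conversely, from any tree with $c(T)\le B$, extract a valid solution of the source instance by reading off the binary choices.

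The step I expect to be the main obstacle is the quantitative control of the adversary's \emph{joint} worst case. Because the positions are shared across all incident edges, a local change in the tree can shift the optimal position profile globally, so I must design the gadgets so that the dynamic-programming value $c(T)$ decomposes cleanly into per-gadget contributions with no unintended cross-gadget interference, and then prove both inequalities around $B$ tightly. Verifying the triangle inequality is the secondary technical point, but it is largely defused by the graph-induced construction; the real work lies in the gadget analysis and in establishing the exact equivalence between the threshold and the source predicate.
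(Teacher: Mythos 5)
Your proposal correctly identifies several of the design principles that the paper's proof actually uses: two-point uncertainty sets, a graph-induced metric (needed precisely because the required distance pattern cannot be realized on a line, which is why this result is weaker than Proposition~\ref{prop:SPNPhard}), distances arranged so that the per-edge maximum is attained only by ``agreeing'' position choices, and the observation that one must escape the regime where all $\dmax_{ij}$ are simultaneously attainable. However, what you have written is a plan, not a proof: the source problem is never fixed, no gadget is constructed, no weights are specified, no threshold $B$ is computed, and neither direction of the equivalence is argued. The step you yourself flag as ``the main obstacle'' --- controlling the adversary's joint worst case and showing that $c(T)$ decomposes without cross-gadget interference --- is exactly the heart of the matter, and it is left entirely open. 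In particular, your SAT-flavoured sketch (variable gadgets plus clause gadgets with ``local budgets'') is problematic as stated, because the adversary computes a single global maximum over $\UU$ and a maximum of a sum of coupled terms does not split into per-gadget maxima; making that decomposition work is not a routine detail but the whole difficulty.

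For comparison, the paper resolves this by reducing from \textsc{partition} rather than a satisfiability problem, using a ladder graph on vertices $v_0,\dots,v_n,w_0,\dots,w_n$ and a metric induced by a weighted graph with two ``layers,'' with $\U_{v_i}=\{v_i^1,v_i^2\}$ and $\U_{w_i}=\{w_i^1,w_i^2\}$. Heavy diagonal penalties force the worst-case $\u$ to place \emph{all} vertices in the bottom layer or \emph{all} in the top layer, so the adversary's optimization collapses to a single binary choice; a counting argument then forces any optimal tree to pick, for each $i$, exactly one of the horizontal edges $\{v_{i-1},v_i\}$ or $\{w_{i-1},w_i\}$, encoding a subset $S$, and the weights are tuned so that $c(F_S)=3nK+\max\bigl(\sum_{i\in S}a_i,\sum_{i\notin S}a_i\bigr)$. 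This choice of source problem is what makes the adversary's tractable maximization nonetheless encode hardness: the $\max$ of two complementary sums is exactly the \textsc{partition} objective. To complete your argument you would need to either reproduce a construction of this kind or demonstrate that your clause-gadget scheme can genuinely isolate the adversary's maximum per gadget, which you have not done.
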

\begin{proof}
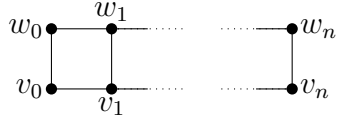
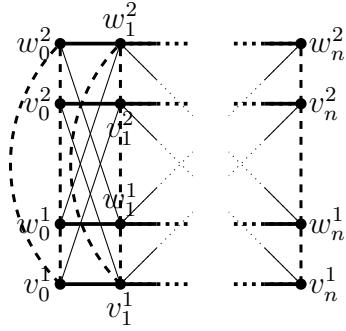
\begin{figure}
\centering
\scriptsize
\subfloat[$\GG$]{
\begin{tikzpicture}[scale=0.8]
\tkzDefPoint(0,0){v0}
\tkzLabelPoint[left](v0){$v_{0}$}
\tkzDefPoint(0,1){w0}
\tkzLabelPoint[left](w0){$w_{0}$}

\tkzDefPoint(1,0){v1}
\tkzLabelPoint[below](v1){$v_{1}$}
\tkzDefPoint(1,1){w1}
\tkzLabelPoint[above](w1){$w_{1}$}

\tkzDefPoint(4,0){vn}
\tkzLabelPoint[right](vn){$v_n$}
\tkzDefPoint(4,1){wn}
\tkzLabelPoint[right](wn){$w_n$}

\foreach \n in {v0,v1,vn,w0,w1,wn}
  \node at (\n)[circle,fill,inner sep=1.5pt]{};

\foreach \v/\w in {v0/w0, v1/w1, vn/wn, v0/v1, w0/w1}
    \draw[-] (\v) to (\w);
    
\foreach \v/\w in {v1/vn, vn/v1, w1/wn, wn/w1}
    \draw (\v) edge ($(\v)!0.2!(\w)$) edge [dotted] ($(\v)!0.4!(\w)$);
\end{tikzpicture}
\label{fig:MSTNPhard:G}
}

\subfloat[$G_{\Metric}$.]{
\begin{tikzpicture}[scale=0.8]
\tkzDefPoint(0,0){v01}
\tkzLabelPoint[left](v01){$v_{0}^1$}
\tkzDefPoint(0,1){w01}
\tkzLabelPoint[left](w01){$w_{0}^1$}

\tkzDefPoint(1,0){v11}
\tkzLabelPoint[below](v11){$v_{1}^1$}
\tkzDefPoint(1,1){w11}
\tkzLabelPoint[above](w11){$w_{1}^1$}

\tkzDefPoint(4,0){vn1}
\tkzLabelPoint[right](vn1){$v_n^1$}
\tkzDefPoint(4,1){wn1}
\tkzLabelPoint[right](wn1){$w_n^1$}

\tkzDefPoint(0,3){v02}
\tkzLabelPoint[left](v02){$v_{0}^2$}
\tkzDefPoint(0,4){w02}
\tkzLabelPoint[left](w02){$w_{0}^2$}

\tkzDefPoint(1,3){v12}
\tkzLabelPoint[below](v12){$v_{1}^2$}
\tkzDefPoint(1,4){w12}
\tkzLabelPoint[above](w12){$w_{1}^2$}

\tkzDefPoint(4,3){vn2}
\tkzLabelPoint[right](vn2){$v_n^2$}
\tkzDefPoint(4,4){wn2}
\tkzLabelPoint[right](wn2){$w_n^2$}

\foreach \n in {v01,v11,vn1,w01,w11,wn1,v02,v12,vn2,w02,w12,wn2}
  \node at (\n)[circle,fill,inner sep=1.5pt]{};

\foreach \v/\w in {v01/v11, v02/v12, w01/w11, w02/w12}
    \draw[line width=0.5mm] (\v) to (\w);
    
\foreach \v/\w in {v01/w01, v11/w11, vn1/wn1, v02/w02, v12/w12, vn2/wn2, w01/v02, w11/v12, wn1/vn2}
    \draw[dashed, line width=0.4mm] (\v) to (\w);
    
\foreach \v/\w in {v01/v12, v02/v11, w01/w12, w02/w11}
    \draw[-] (\v) to (\w);

\foreach \v/\w in {v01/w02, v11/w12, vn1, wn2}
    \draw[dashed, line width=0.4mm] (\v) to[bend left=45] (\w);
    
\foreach \v/\w in {v11/vn1, vn1/v11, w11/wn1, wn1/w11, v12/vn2, vn2/v12, w12/wn2, wn2/w12}
    \draw[line width=0.5mm] (\v) edge ($(\v)!0.2!(\w)$) edge [dotted] ($(\v)!0.4!(\w)$);
    
\foreach \v/\w in {v11/vn2, vn2/v11, w11/wn2, wn2/w11, w12/wn1, wn1/w12, v12/vn1, vn1/v12}
    \draw (\v) edge ($(\v)!0.2!(\w)$) edge [dotted] ($(\v)!0.4!(\w)$);
\end{tikzpicture}
\label{fig:MSTNPhard:GM}
}
\caption{\label{fig:MSTNPhard} Graphs used in the reduction for the minimum spanning tree problem.}
\end{figure}

 We consider the same partition problem as in the proof of Proposition~\ref{prop:SPNPhard}. Now $\GG$ contains the $2n+2$ vertices and $3n+1$ edges depicted on Figure~\ref{fig:MSTNPhard:G}. We consider the metric space $(\Metric,d)$ induced by the weighted graph $G_{\Metric}=(V_{\Metric},E_{\Metric},\omega)$ depicted on Figure~\ref{fig:MSTNPhard:GM}. Let $K>0$ be a number large enough. The dashed edges and thin edges have their weights equal to $K$ and $2K$, respectively, while $\omega_{v_{i-1}^1v_{i}^1}=3K+a_i$, $\omega_{v_{i-1}^2v_{i}^2}=3K+\frac{A}{n}-a_i$, $\omega_{w_{i-1}^1w_{i}^1}=3K$, and $\omega_{w_{i-1}^2w_{i}^2}=3K+\frac{A}{n}$ for each $i=1,\ldots,n$. The weight vector $\omega$ satisfies the triangle inequalities, so the metric $d$ induced on $V_{\Metric}$ by the shortest paths in $G_{\Metric}$ satisfies $d_{ij}=\omega_{ij}$ for each $\{i,j\}\in E_{\Metric}$.
 Finally, we define $\U_{v_i}=\{v_i^1,v_i^2\}$ and $\U_{w_i}=\{w_i^1,w_i^2\}$ for $i=0,\ldots,n$.
 
We first observe that the cost of a vertical edge $\{v_i,w_i\}$ is equal to $K$ for all positions of $(v_i,w_i)\in\U_{v_i}\times \U_{w_i}$. Let us consider any tree $T$ in $\GG$ that contains $n^v$ vertical edges and $n^h$ horizontal edges, where $n \leq n^h\leq 2n$. For $K$ large enough, we claim that the worst-case $\u\in\U$ locates all vertices either in the bottom layer of $G_{\Metric}$ that consists of vertices $v^1_i$ and $w^1_i$ for  $i=0,\ldots,n$, or in the top layer that consists of the remaining vertices. To prove the claim, notice that the weight of any horizontal edge in $G_{\Metric}$ is comprised between $3K-A$ and $3K+A$, while the weight of any diagonal edge is $2K$. Hence, if $\u$ locates all its vertices either in the bottom or in the top layer, the resulting cost is not smaller than $c=n^h(3K-A)+n^vK$. On the contrary, if $\u$ alternates at least once between the layers, its cost cannot be greater than $c'=(n^h-1)(3K+A)+2K+n^vK$. Hence, taking $K>(4n-1)A\geq(2n^h-1)A$ ensures $c>c'$, proving the claim.

We prove next that for $K$ large enough, any optimal tree $T$ in $\GG$ must contain $n + 1$ vertical edges and $n$ horizontal ones. Following the above claim, the cost of a horizontal edge $\{v_i,v_{i+1}\}$ or $\{w_i,w_{i+1}\}$ for a worst-case $\u\in\U$ is comprised between $3K-A$ and $3K+A$. Hence, any tree $T$ with $n^h\in\{n+1,\ldots,2n\}$ horizontal edges costs at least $c=n^h(3K-A)+n^vK$ while any tree having $n^h-1$ horizontal edges costs at most $c=(n^h-1)(3K+A)+(n^v+1)K$. Hence, taking $K>2nA\geq n^hA$ ensures $c>c'$, proving $n^h=n$ in any optimal solution. 

As in the proof of Proposition~\ref{prop:SPNPhard}, we let $S\subseteq\{1,\ldots,n\}$ be a subset of integers and $\overline{S}$ its complement. We associate to $S$ the tree $T_S$ that contains $\{v_{i-1},v_i\}$ for each $i\in S$ and $\{w_{i-1},w_i\}$ for each $i\in\overline{S}$. Following the claim above, only two scenarios in $\UU$ must be considered, and following again the reasoning used in the proof of Proposition~\ref{prop:SPNPhard}, we have
$$
c(T_S) = \max\left(3nK+\sum_{i\in S}a_i,3nK+A-\sum_{i\in S}a_i\right) 
=3nK + \max\left(\sum_{i\in S}a_i,\sum_{i\in \overline{S}}a_i\right).
$$
Hence, there exists a spanning tree $T_S$ in $\X$ with minimum cost of $3nK+A/2$ if and only if there exists a set $S$ such that $\sum_{i\in S} a_i = \sum_{i\in \overline{S}} a_i=A/2$. 
\end{proof}

We detail in the remark below how, for any positive integer $\ell$, the metric space $(\Metric,d)$ used in the proof of Proposition~\ref{prop:MSThard} cannot be embedded isometrically into $\R^\ell$. 
As a consequence, the hardness of \ROBUST{MST}  in Euclidean spaces remains an open problem.

\begin{remark}
 The graph $G_{\Metric}$ described in the above proof cannot be embedded isometrically into an Euclidean space, as can be seen by considering the triangle $w_0^1\, w_{1}^1\, w_0^2$ and the fourth point $v_0^2$. The sides of the triangle have length $d(w_0^1,w_{1}^1)=3K$, $d(w_{1}^1,w_0^2)=2K$, and $d(w_0^2,w_0^1)=2K$. Hence, since $d(v_0^2,w_0^1)=d(v_0^2,w_0^2)=K$, any isometric embedding maps $v_0^2$ to the midpoint of segment $w_0^1\,w_0^2$, so its Euclidean distance to $w_{1}^1$ must be $\sqrt{\frac{11}{2}}K$. This is in contradiction with $d(v_0^2,w_{1}^1)=\min(\omega_{v_0^2w_0^1}+\omega_{w_0^1w_{1}^1},\omega_{v_0^2v_1^2}+\omega_{v_1^2w_{1}^1})=\min(4K,4K+\frac{A}{n}-a_1)$. The above illustrates that when $\X$ contains all spanning trees of $\GG$, the complexity of~\ref{eq:minmaxCO}  is still open when one considers only Euclidean metric spaces.
\end{remark}


\subsection{Problem $\EVALC$}
\label{sec:EVALChard}

We now turn to the difficulty of computing the objective function \ref{eq:evalc}. \blue{Given that $x$ is fixed throughout, we denote $G(x)$ more shortly as $G$. Furthermore, we denote the sets of vertices and edges of $G$ as $V[G]$ and $E[G]$, respectively.} Our first result (Proposition~\ref{prop:maxcut} below) is that \ref{eq:evalc} is hard, even when the metric space is reduced to two points, or the input graph is a clique. For this, we consider particularly simple metric spaces, and rely on a reduction from problem \MAXCUT. We recall that \MAXCUT is a famous problem in combinatorial optimization that, given any input graph $G$, seeks a partition $\{V_1,V_2\}$ of $V[G]$ such that $|\{e \in E[G] : \card{e \cap V_1}=1\}|$ is maximized.

\begin{proposition}\label{prop:maxcut}
Even when $|\Metric|=2$, there is no \PTAS for \ref{eq:evalc} unless \PEQNP.
\end{proposition}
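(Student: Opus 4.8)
The plan is to reduce from \MAXCUT, exploiting the fact that when the metric space contains only two points, a placement of the vertices \emph{is} a bipartition. Given an instance $G$ of \MAXCUT, I would construct the \EVALC instance on the same graph $G$ with metric space $\Metric=\{a,b\}$, distances $\dist{a}{b}=1$ and $\dist{a}{a}=\dist{b}{b}=0$, and uncertainty sets $\U_i=\{a,b\}$ for every $i\in V(G)$. This is a valid input of \EVALC: we have $\bigcup_{i\in V(G)}\U_i=\{a,b\}$, and $\drob$ is simply the $2\times 2$ matrix recording $\dist{a}{b}=1$. Clearly $|\Metric|=2$, as required.

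The key observation is that every position vector $\u\in\UU$ corresponds bijectively to a bipartition $\{V_1,V_2\}$ of $V(G)$ through $V_1=\{i:u_i=a\}$ and $V_2=\{i:u_i=b\}$. Under this correspondence an edge $\{i,j\}\in E(G)$ contributes $\dist{u_i}{u_j}=1$ to $c(\u,G)$ precisely when $u_i\neq u_j$, i.e. when $\{i,j\}$ crosses the cut, and contributes $0$ otherwise. Hence $c(\u,G)=\card{\{e\in E(G):\card{e\cap V_1}=1\}}$, so that computing $c(G)=\max_{\u\in\UU}c(\u,G)$ is exactly solving \MAXCUT on $G$. In particular the two instances share the same optimal value, and a feasible solution of one is turned in linear time into a feasible solution of the other with identical objective value.

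Because the correspondence preserves the objective value exactly (not merely up to an additive or multiplicative constant), it transfers approximation guarantees verbatim: a solution $\u$ of value at least $(1-\varepsilon)\,\OPT$ for \EVALC yields a cut of value at least $(1-\varepsilon)\,\OPT$ for \MAXCUT, and vice versa. Consequently a \PTAS for \EVALC would immediately give a \PTAS for \MAXCUT. Since \MAXCUT admits no \PTAS unless \PEQNP, neither does \EVALC, which establishes the proposition even in the restricted regime $|\Metric|=2$.

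The reduction is elementary, so I expect no real technical obstacle; the only points requiring attention are of a bookkeeping nature. First, one must quote the correct inapproximability statement for \MAXCUT --- namely that it has no \PTAS unless \PEQNP --- rather than mere \NPH status, since the claim concerns the absence of an approximation scheme. Second, one should verify that the degenerate two-point distance satisfies the metric axioms (non-negativity, symmetry, $\dist{x}{y}=0 \iff x=y$, and the triangle inequality), which is immediate here, so that the constructed instance genuinely falls within the definition of \EVALC given in Section~\ref{sec:def:rob}.
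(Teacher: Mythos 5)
Your reduction is correct and is essentially identical to the paper's: Appendix~\ref{app:maxcut} also maps a \MAXCUT instance $G$ to the \EVALC instance on the same graph with a two-point metric ($\Metric=\{0,1\}$, $d(x,y)=|x-y|$, $\U_i=\Metric$), observes the value-preserving bijection between placements and bipartitions, and invokes the preservation of approximability (phrased there as an $S$-reduction) together with the \MAXCUT inapproximability result. Nothing is missing from your argument.
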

\begin{proof}
Let us denote the objective function of \MAXCUT as $f^{\MAXCUT}(V_1,V_2)=|\{e \in E[G]\mid |e \cap V_1|=1\}|$. Further, we denote by $\OPT^{\MAXCUT}(G)$ the value of an optimal solution for graph $G$. Given an input graph $G$ of \MAXCUT, we define $\Metric=\{0,1\}$ and $I$ as the graph $G$ itself, $\U_i = \Metric$ for any $i \in V[G]$, and the distance $d$ by $d(x,y)=|x-y|$.

Given a solution $\{V_1,V_2\}$ (which is a partition) of \MAXCUT, we define $u_i = 0$ if $i \in V_1$, and $1$ if $i \in V_2$. This implies $c(\u,I) = f^{\MAXCUT}(V_1,V_2)$. For the reverse direction, given a solution $u$ of \ref{eq:evalc}, we define $V_1 = \{i \mid u_i=0\}$ and $V_2 = V[G] \setminus V_1$, and we also have $c(\u,I) = f^{\MAXCUT}(V_1,V_2)$.

The above immediatly implies that there is an $S$-reduction (see for instane~\cite{612321}) from \MAXCUT to \ref{eq:evalc}, proving the result.
\end{proof}

%

Let us now turn to parameterized complexity,  and let $tw$ be the treewidth of $G$, see Appendix~\ref{app:tw} for the formal definition of treewidth. Informally, $tw$
measures the thickness of a tree structure defining $G$.
In particular, $tw(G)=1$ for any tree $G$.
As we show in the next section that computing \ref{eq:evalc} is polynomial on trees, a natural
question is to determine if we can extend this result by proving that \EVALC/$tw$ admits an \FPT algorithm (where \EVALC/$tw$ denotes problem \ref{eq:evalc} parameterized by $tw$, as defined
in Appendix~\ref{app:fpt}), meaning an algorithm running in $f(tw)\cdot |I|^c$ for some computable function $f$ and constant $c$.
The following proposition implies that it is very unlikely, and thus places \ref{eq:evalc} with the few problems that are not \FPT by treewidth. 
\begin{proposition}\label{prop:w1tw}
\EVALC/tw is \WONEH.
\end{proposition}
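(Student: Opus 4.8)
The plan is to give a parameterized reduction from the \LISTCOL problem parameterized by treewidth, which is known to be \WONEH. Recall that an instance of \LISTCOL consists of a graph $H$ together with a list $L(v)$ of admissible colors for each vertex $v\in V(H)$, and the question is whether there is a proper coloring $\phi$ with $\phi(v)\in L(v)$ for every $v$ and $\phi(u)\neq\phi(v)$ for every edge $\{u,v\}\in E(H)$. The relevant parameter is $tw(H)$. Since \EVALC is stated as an evaluation problem, I would work with its natural decision version: given $(G,\U,\drob)$ and an integer $K$, decide whether $c(G)\geq K$.

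Given a \LISTCOL instance, I would construct the \EVALC instance $(G,\U,\drob)$ as follows. Set $G=H$. Let the metric space $\Metric$ be the set of all colors occurring in the lists, equipped with the discrete metric, so that $\dist{a}{b}=1$ whenever $a\neq b$ and $\dist{a}{a}=0$; this is a genuine metric, so $\drob$ is well defined. Put $\U_v=L(v)$ for every vertex $v$, and fix the threshold $K=|E(H)|$. This construction is clearly computable in polynomial time.

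The correctness hinges on a single observation: a position vector $\u\in\U$ is exactly a list coloring $\phi$, and under the discrete metric each edge $\{i,j\}\in E(G)$ contributes $\dist{u_i}{u_j}=1$ to $c(\u,G)$ when its endpoints receive distinct colors and $0$ otherwise. Hence $c(\u,G)$ equals the number of bichromatic edges of $\phi$, and therefore $c(G)=\max_{\u\in\U}c(\u,G)=|E(H)|$ if and only if $H$ admits a proper list coloring; in the contrary case every admissible coloring leaves at least one monochromatic edge, so $c(G)\leq|E(H)|-1$. Thus the \EVALC instance with threshold $K$ is a yes-instance precisely when the original \LISTCOL instance is.

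It remains to check that this is a parameterized reduction. Since $G=H$ we have $tw(G)=tw(H)$, so the parameter is preserved exactly; combined with the polynomial running time this gives a parameterized reduction, and the \WONEHness of \LISTCOL/tw transfers to \EVALC/tw. The construction is deliberately simple, so the only points that require care are the formulation of \EVALC as a decision problem and the verification that the discrete-metric gadget correctly encodes ``monochromatic edge'' as ``zero contribution'', which is exactly what makes maximization of the distance sum detect proper colorability; the treewidth bound is immediate because the graph is left unchanged.
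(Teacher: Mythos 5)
Your reduction is exactly the one the paper uses: it reduces from \LISTCOL parameterized by treewidth, takes $\Metric$ to be the set of colors with the discrete metric, sets $\U_i=L(i)$, and observes that the instance is a yes-instance of \LISTCOL if and only if $c(G)=|E(G)|$, with the treewidth unchanged since the graph is. The proposal is correct and matches the paper's proof, with only the added (and welcome) care of phrasing \EVALC as a decision problem with threshold $K=|E(H)|$.
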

\begin{proof}
  Given a graph $G$, and a set of integers (called colors) $L(i)$ for any $i \in V[G]$, problem \LISTCOL aims at deciding whether we can find a color $f(i) \in L(i)$ for any $i \in V[G]$ such that for any edge $\{i,j\} \in E[G]$, $f(i) \neq f(j)$.
  It is known~\cite{fellows2011complexity} that \LISTCOL/$tw$ is \WONEH. Let us now prove that there is a parameterized reduction from \LISTCOL/$tw$ to \EVALC/$tw$,
  which implies (see Appendix~\ref{app:fpt}) that \EVALC/$tw$ is \WONEH.

Given a graph $G$ a list of colors $L(i)$ for any $i \in V[G]$, we define $\Metric=\bigcup_{i \in V[G]}L(i)$, and $d(c_1,c_2)=0$ is $c_1=c_2$, and $1$ otherwise.
We define the uncertainty set of $G$ as follows: for any $i$, we let $\U_i = L(i)$.
It is now straightforward to verify that we have a YES-instance of \LISTCOL if and only if $\cost(G)=\blue{|V[E]|}$. As the reduction can be computed in polynomial time, and the graph (and thus its treewidth) is unchanged, this provides a parameterized reduction, and we get the desired result.
\end{proof}

\section{Exact solution of \textsc{robust}-$\Pi$}
\label{sec:exact}

A popular type of algorithms solving exactly difficult robust optimization problems replaces the large uncertainty set by an approximation of small cardinality, leading to a relaxation of the original problem. Then, these algorithms iterate between solving integer programming formulations for the robust problem with small uncertainty set, and checking the optimality of the solution for the relaxation by solving an adversarial separation problem. This process leads to cutting plane algorithms (e.g.,~\cite{BertsimasDL16,FischettiM12,Naoum-SawayaB16}). Such algorithms involve frequent calls to computing the objective function \ref{eq:evalc}, so we start this section by studying how to solve this problem. Then, we detail in Section~\ref{sec:CProbustPi} the overall cutting plane algorithm for \ref{eq:minmaxCO}. 

\subsection{\blue{Problem \EVALC}}
\label{subsec:tw}

\blue{As in Section~\ref{sec:EVALChard}, we denote $G(x)$ more shortly as $G$ in what follows.}
Given the hardness results from the previous section, we propose two approaches to computing \ref{eq:evalc} that have non-polynomial running times in general. The first approach relies on an integer programming formulation. For each $i\in \VV$ and $k\in\{1,\ldots,\nU{i}\}$, binary variable $y_i^k$ takes value 1 if and only if vertex $i$ is located at position $u_i^k$. Therefore, \ref{eq:evalc} is equal to
\begin{align*}
 \max & \sum_{\{i,j\}\in E[G]} \sum_{k=1}^{\nU{i}}\sum_{\ell=1}^{\nU{j}} d(u_i^k,u_j^\ell)y_i^k y_j^\ell\\
 \mbox{s.t.} & \sum_{k=1}^{\nU{i}} y_i^k =1, \quad \forall i\in V[G] \\
 & y_i\in\{0,1\}^{\nU{i}},\quad\forall i\in V[G]
\end{align*}
which can be linearized using classical techniques.

It is also possible to compute \ref{eq:evalc} efficiently whenever $G$ has small treewidth $tw(G)$ using a dynamic programming algorithm. Let us detail the algorithm whenever $G$ is a tree rooted at vertex $r$, which we assume oriented from $r$ to its leaves $L$. We denote by $D(i)$ the set that contains the direct descendants of $i$, which is empty if $i$ is a leaf. Let $\OPT(i,u_i)$ be the maximum value obtained for the subtree starting at $i$ given that node $i$ is located at $u_i$. We obtain the following recursion:
\begin{equation}
 \label{eq:basicDP}
\OPT(i,u_i) = \left\{
\begin{array}{ll}
\sum\limits_{j \in D(i)} \max\limits_{u_j \in \U_j} d(u_i,u_j) + \OPT(j,u_j), & i\in V[G]\setminus L \\
0, & i \in L
\end{array}
\right.
\end{equation}
and the optimal solution cost is given by $\max_{u_r\in\U_r}\OPT(r,u_r)$. Dynamic programming recursion~\eqref{eq:basicDP} will be used in our numerical experiments, which involve trees and stars.

Recall that $tw = tw(G)$ and let us further denote $\NU=\max_{i \in V[G]}\nU{i}$. Using dynamic programming on a well-chosen tree decomposition of $G$ (see Appendix~\ref{app:tw} for the definition), one can readily extend the above idea to any graph of bounded treewidth, leading to Theorem~\ref{thm:tw}, whose proof is deferred to Appendix~\ref{app:tw2}. We point out that according to Proposition~\ref{prop:w1tw} we cannot (unless $\mathcal{W}[1]=$\FPT) remove the dependency in $\NU$ to get for example a $\grandO(poly(n) \times f(tw))$, and this holds for any computable function~$f$.

\begin{theorem}\label{thm:tw}
  \EVALC/$tw+\NU$ is \FPT. More precisely, we can compute an optimal solution of \ref{eq:evalc} in time $\grandO(n \times tw \times \NU^{\grandO(tw)})$.
\end{theorem}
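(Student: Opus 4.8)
The plan is to design a dynamic programming algorithm that runs over a \emph{nice tree decomposition} of $G$, generalizing the tree recursion~\eqref{eq:basicDP} to bags of width at most $tw$. First I would invoke the standard fact (Bodlaender's algorithm, or any of the constant-factor approximations) that a nice tree decomposition of width $\grandO(tw)$ with $\grandO(n \cdot tw)$ nodes can be computed in time $f(tw)\cdot n$; since we only claim \FPT membership with the stated running time, we may assume such a decomposition is given. A nice tree decomposition has four node types (leaf, introduce, forget, join), and the whole point is that each bag $B$ separates the already-processed part of the graph from the rest, so a partial solution is completely summarized by the \emph{positions chosen for the vertices currently in the bag}.

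The key object is a DP table indexed, at each bag node $x$ with bag $B_x$, by an assignment $\phi: B_x \to \Metric$ that places each vertex $i\in B_x$ at one of its $\nU{i}$ candidate positions. The table entry $\OPT(x,\phi)$ stores the maximum total edge length achievable over the subgraph $G_x$ induced by all vertices introduced at or below $x$, \emph{consistent with} $\phi$ on $B_x$, where we count the length $d(u_i,u_j)$ of an edge $\{i,j\}$ exactly once, namely at the highest (closest to the root) bag that still contains both endpoints — equivalently, we charge edge $\{i,j\}$ when the second of its two endpoints is forgotten, or more cleanly we add $d(\phi(i),\phi(j))$ when introducing a vertex adjacent to vertices already in the bag. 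The recursions are routine: at a \textbf{leaf} bag the table is $0$; at an \textbf{introduce} node adding vertex $i$ we extend each $\phi$ by every choice of position for $i$ and add $\sum_{j\in B_x\cap N(i)} d(\phi(i),\phi(j))$; at a \textbf{forget} node dropping vertex $i$ we take $\OPT(x,\phi)=\max_{u_i\in\U_i}\OPT(x',\phi\cup\{i\mapsto u_i\})$; at a \textbf{join} node with children $x_1,x_2$ sharing bag $B_x$ we set $\OPT(x,\phi)=\OPT(x_1,\phi)+\OPT(x_2,\phi)$, being careful that the contribution of edges internal to $B_x$ is counted only once (this is handled by the convention of charging an edge at introduce time, so join simply adds the two subtree values without double-counting the bag-internal edges).

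The running-time bound follows directly from the table size and the work per entry. Each bag has at most $tw+1$ vertices, each with at most $\NU=\sigma$ candidate positions, so there are at most $\sigma^{tw+1}=\sigma^{\grandO(tw)}$ assignments $\phi$ per bag. Leaf, introduce, and forget nodes do $\grandO(\sigma^{\grandO(tw)})$ work (the forget maximization costs an extra factor $\sigma$, absorbed into the exponent); the join node is the most delicate, but because children share the same bag and hence the same index set, pairing up entries by matching $\phi$ costs $\grandO(\sigma^{\grandO(tw)})$ as well, avoiding the naive $\sigma^{2(tw+1)}$ blow-up. Multiplying by the $\grandO(n\cdot tw)$ nodes of the decomposition yields $\grandO(n \times tw \times \sigma^{\grandO(tw)})$, and reading off $\max_{u_r}\OPT(\text{root},\cdot)$ at an empty-bag root gives $\cost(G)$, establishing both the \FPT claim for parameter $tw+\NU$ and the explicit bound. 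The main obstacle, and the step I would write most carefully, is the bookkeeping that ensures every edge $\{i,j\}$ is counted exactly once across the whole decomposition despite appearing in many bags; the clean fix is the charging convention above, whose correctness rests on the defining property of tree decompositions that the set of bags containing a fixed vertex is connected, so each edge has a well-defined topmost common bag.
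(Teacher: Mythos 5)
Your overall strategy coincides with the paper's proof in Appendix~\ref{app:tw2}: a dynamic program over a nice tree decomposition whose states are the assignments of a position $u_i\in\U_i$ to each vertex $i$ of the current bag, giving $\sigma^{\grandO(tw)}$ states per bag and the claimed running time. However, there is a genuine error in your join case. Under the charging convention your recursions actually implement (an edge $\{i,j\}$ contributes $d(\phi(i),\phi(j))$ at an introduce node for one endpoint when the other endpoint is already present in the bag), an edge with both endpoints in the bag $B_x$ of a join node $x$ is charged in \emph{both} children's subtrees: since leaf bags are empty, every vertex of $B_x$ is introduced at least once in the subtree rooted at $x_1$ and at least once in the subtree rooted at $x_2$, so each bag-internal edge is charged at least once in each child's table value. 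Consequently $\OPT(x_1,\phi)+\OPT(x_2,\phi)$ double-counts exactly the edges internal to $B_x$, and the paper's recursion subtracts the correction term $d^{(w,f)}=\sum_{i,j\in X^w,\,\{i,j\}\in E(G)}d(f(i),f(j))$ for precisely this reason (Lemma~\ref{lemma:join2}).

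The justification you offer --- that each edge has a well-defined topmost common bag because the bags containing a fixed vertex form a connected subtree --- is true but supports a \emph{different} convention (charge each edge only at its unique topmost common bag, or equivalently at the unique forget node of its first-forgotten endpoint), and that convention is incompatible with your introduce rule, which charges the edge at every \emph{lowest} bag containing both endpoints rather than at the topmost one. The three conventions you present as interchangeable are not: you must either keep introduce-time charging and subtract the bag-internal edge lengths at every join, or switch to forget-time charging (each vertex is forgotten exactly once, so joins become clean additions) and move the edge contributions from the introduce rule to the forget rule. Either fix is a one-line change; the rest of your argument, including the bound $\grandO(n\times tw\times\sigma^{\grandO(tw)})$ obtained from $\grandO(n\cdot tw)$ decomposition nodes and $\sigma^{\grandO(tw)}$ states per node, is sound and matches the paper.
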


\subsection{Cutting plane algorithm for the robust problem}
\label{sec:CProbustPi}

Now that we have depicted numerical methods for computing \ref{eq:evalc}, we wish to make the extra step towards the exact solution of the complete problem, \ref{eq:minmaxCO}.
For this, we design an exact solution algorithm that generates scenarios of $\UU$ on the fly in the course of a branch-and-cut algorithm. 

Let $\tUU$ be a finite subset of $\UU$. An exact algorithm for~\ref{eq:minmaxCO}, described in Algorithm~\ref{algo:cp}, relies on the following relaxed formulation
\begin{equation}
\label{eq:ILPrelaxed}
\min\set{\omega}{\omega \geq \sum_{\{i,j\}\in \EE} x_{ij} d(u_i,u_j), \; \forall \u\in\tUU, x \in \X}.
\end{equation}

Algorithm~\ref{algo:cp} describes an iterative cutting-plane implementation, alternating between the solution of the relaxed master problem~\eqref{eq:ILPrelaxed} and the adversarial separation problem \ref{eq:evalc}. Practical implementation of these algorithms typically rely instead on branch-and-cut algorithms, where the adversarial separation problem is solved at each integer node of the branch-and-bound-tree. 

\begin{algorithm}
\DontPrintSemicolon
\SetKwInOut{Return}{return}
\SetKwInOut{Initialization}{initialization}
\Repeat{$\cost(G) \leq \tomega$}
{
Let $(\tomega,\tx)$ be an optimal solution of~\eqref{eq:ILPrelaxed}\;
Let $G$ be the graph induced by $\tx$\;
Compute $\cost(G)=\max_{\u\in\UU}\sum_{\{i,j\}\in E[G]} d(\tu_i,\tu_j)$ and let $\tu$ be a maximizer\;
\lIf{$\cost(G) > \tomega$}{$\tUU \leftarrow \tUU \cup \{\tu\}$}
}
\textbf{return} $G$
\caption{Cutting-plane algorithm for~\ref{eq:minmaxCO}}
\label{algo:cp}
\end{algorithm}


\subsection{Compact formulation for stars}
\label{sec:compact}

Depending on the structure of the elements of $\X$, the dynamic programming recursion~\eqref{eq:basicDP} naturally leads to a compact formulation for the problem. We detail next this idea for the case where any $x\in\X$ describes a union of disjoint stars rooted at the elements of a known set $\roots\subseteq \VV$. 
For each $\root\in\roots$ and $u_\root^k\in\U_\root$, let us introduce the optimization variable $z_{\root}^k$ to model $\OPT(\root,u_\root^k)$, the cost of the star rooted at $\root$ given that $u_\root=u_\root^k$.
Let $N(i)$ be the set of neighbours of any node $i\in\VV$. Plugging variables $z$ and $x$ into~\eqref{eq:basicDP} and noticing that any node connected to $\root$ must be a leaf, we immediately obtain
\begin{equation}
\label{eq:zrootmax}
 z_{\root}^k = \sum\limits_{j \in N(\root)} x_{\root j} \max\limits_{u_j \in \U_j} d(u_\root^k,u_j).
\end{equation}
Notice that the maximization appearing in the right-hand-side of~\eqref{eq:zrootmax} does not involve optimization variables, se we can define the constant $\dmax_{\root k j}=\max_{u_j \in \U_j} d(u_\root^k,u_j)$. Then, introducing $z_\root$ as the worst-case cost of the star rooted at $\root$, we have
\begin{equation}
\label{eq:zrootdef}
z_{\root} = \max\limits_{k\in [\nU{\root}]}z_{\root}^k = \max\limits_{k\in [\nU{\root}]}\sum\limits_{j \in N(\root)} x_{\root j} \dmax_{\root k j}. 
\end{equation}
Overall, we wish to minimize the sum of $z_{\root}$ over all $\root\in\roots$.  Reformulating~\eqref{eq:zrootdef} through an epigraphic reformulation, we obtain
\begin{align}
\min\quad & \omega \nonumber\\
\mbox{s.t.}\quad & \omega \geq \sum_{\root \in \roots} z_\root \\
& z_\root \geq \sum\limits_{j \in N(\root)} x_{\root j} \dmax_{\root k j}, \quad \forall \root\in\roots, k\in [\nU{\root}]\\
& x \in \X, z\geq 0\nonumber
\end{align}
The above construction can, in theory, be extended to trees. However, in that case the recurrence relations lead to products between variables, which turns out to be inefficient numerically. See Appendix~\ref{app:compactSTP} for details. 

\section{Conservative approximation}
\label{sec:adr}

We introduce next a (conservative) approximation of~\ref{eq:minmaxCO} that leads to compact formulations. Let us introduce an additional optimization variable $\mu_e\in\Metric$ for each $e\in \EE$, and consider the following optimization problem 
\begin{equation}
\tag{\CONSERVATIVE{$\Pi$}}
\label{eq:minmaxCOmu}
\min_{x\in \X \atop \mu\in\Metric^{|\EE|}} \max_{\u\in\UU}\sum_{\{i,j\}\in \EE} x_{ij}(\dist{u_i}{\mu_{ij}}+\dist{\mu_{ij}}{u_j}).
\end{equation}
\blue{One might interpret the additional variable $\mu_{ij}$ as a compulsory crossing point from vertex $i$ to vertex $j$, regardless of the position of these vertices. Using these crossing points, each distance function only involves a single node, leading to simpler reformulations as we show below.}
\begin{remark}
 Due to the triangle inequalities, the optimal solution cost of~\ref{eq:minmaxCOmu} is not smaller than the optimal solution cost of~\ref{eq:minmaxCO}, so~\ref{eq:minmaxCOmu} is a conservative approximation of~\ref{eq:minmaxCO}.
\end{remark}

We show next how to reformulate~\ref{eq:minmaxCOmu} as a discrete optimization problem featuring a polynomial number of variables. Noticing that 
$$
\sum_{\{i,j\}\in \EE} x_{ij}(\dist{u_i}{\mu_{ij}}+\dist{\mu_{ij}}{u_j})
=
\sum_{i\in \VV} \sum_{\{i,j\}\in \EE} x_{ij}\dist{u_i}{\mu_{ij}},
$$
we obtain
$$
\max_{\u\in\UU}\sum_{\{i,j\}\in \EE} x_{ij}(\dist{u_i}{\mu_{ij}}+\dist{\mu_{ij}}{u_j})
=
\max_{\u\in\UU}\sum_{i\in \VV} \sum_{\{i,j\}\in \EE} x_{ij}\dist{u_i}{\mu_{ij}}
=
\sum_{i\in \VV} \max_{u_i\in\U_i}\sum_{\{i,j\}\in \EE} x_{ij}\dist{u_i}{\mu_{ij}}.
$$
Thus, we can introduce an additional variable $d_i$ for each node $i\in \VV$, so~\ref{eq:minmaxCOmu} can be reformulated as
 \begin{align}
\min\quad & \sum_{i\in \VV}d_i \label{eq:adr_general_first}\\
\mbox{s.t.}\quad & d_i \geq \sum_{\{i,j\}\in \EE} x_{ij}\dist{u_i}{\mu_{ij}}, \quad\forall i\in V,u_i\in \U_i\label{eq:adr_general}\\
& x \in \X, \;\mu\in\Metric^{|\EE|}. \label{eq:adr_general_last}
\end{align}
The interest of the above reformulation is that the uncertainty sets $\U_i$, $i\in\VV$, appear in distinct constraints, so~\eqref{eq:adr_general} contains $\sum_{i\in \VV}\nU{i}$ constraints, which is significantly smaller than the $\prod_{i\in \VV}\nU{i}$ elements in the global uncertainty set $\UU$. 
In practice, the numerical difficulty of problem~\eqref{eq:adr_general_first}--\eqref{eq:adr_general_last} typically depends on the considered metric space~$(\Metric,d)$ and feasibility set~$\X$. For instance, using ad-hoc pre-processing rules, we may be able to reduce the domain of each variable $\mu_{ij}$ to a small subset of $\Metric_{ij}\subset\Metric$. These rules may not even need to be exact as problem~\eqref{eq:adr_general_first}--\eqref{eq:adr_general_last} is only a conservative approximation of the original problem~\ref{eq:minmaxCO}.

In what follows, we further develop the case where $(\Metric,d)$ is the $p$-dimensional Euclidean space so the distance $\dist{u_i}{u_j}=\|u_i-u_j\|_2$ is now well-defined for any $u_i,u_j\in \R^p$. We can leverage this to relax the discrete restriction $\mu\in\Metric^{|\EE|}$ to $\mu\in\R^{p\times|\EE|}$, obtaining
\begin{align}
\min\quad & \sum_{i\in \VV}d_i \\
\mbox{s.t.}\quad & d_i \geq \sum_{\{i,j\}\in \EE} x_{ij}\|u_i-\mu_{ij}\|_2, \quad\forall i\in \VV,u_i\in \U_i\label{eq:adr_euclidean}\\
& x \in \X, \; \mu\in\Metric^{|\EE|}.
\end{align}
The non-linearities in constraints~\eqref{eq:adr_euclidean} can be avoided by replacing $x_{ij}\|u_i-\mu_{ij}\|_2$ with $\|x_{ij} u_i-\mu_{ij}\|_2$: if $x_{ij}=1$, both expressions coincide; otherwise, $x_{ij}=0$, and setting $\mu_{ij}=0$ also yields equality. Introducing additional variables to isolate each norm into a unique constraint, we finally obtain the following mixed-integer second-order cone programming formulation
\begin{align}
\min\quad &  \sum_{i\in \VV}d_i \label{adr:first}\\
\mbox{s.t.}\quad &
d_i \geq \sum_{e\in \EE}\sum_{i\in e} \nu_{i,e}^k\\
& \nu_{i,e}^k \geq \|x_{e} u_i^k-\mu_{e}\|_2, \quad \forall e\in \EE, i\in e, k\in[\nU{i}]\\
& x \in \X,\;  \mu\in\Metric^{|\EE|}.\label{adr:last}
\end{align}

We conclude this section by mentioning that~\eqref{adr:first}--\eqref{adr:last} can be alternatively obtained by following the approach proposed in~\citet{zhen2021robust}. Specifically, let us recall the epigraphic reformulation of~\ref{eq:minmaxCO}
\begin{align}
\min\quad & \omega \label{eq:eculideanfirst}\\
\mbox{s.t.}\quad & \omega \geq \sum_{\{i,j\}\in \EE} x_{ij} \|u_i-u_j\|_2, \quad \forall \u\in\UU\label{eq:epigraphicL2}\\
& x \in \X.\label{eq:eculideanlast}
\end{align}
We detail in Appendix~\ref{app:adr} how each constraint~\eqref{eq:epigraphicL2} can be reformulated by introducing recourse variables which, approximated through affine decision rules, leads exactly to \eqref{adr:first}--\eqref{adr:last}. This connection underlines that the difference between the optimal solution costs of~\ref{eq:minmaxCO} and~\ref{eq:minmaxCOmu} can be interpreted as the suboptimality of affine decision rules for approximating two-stage robust optimization. It also suggests that stronger conservative approximations could be obtained by using more expressive decision rules, such as the lifted affine decision rules proposed by~\cite{de2017tractable}.

\section{Computational experiments}
\label{sec:num}

In this section, we compare numerically the exact algorithm from Section~\ref{sec:exact}, denoted \exact{} hereafter, with three heuristic algorithms that solve deterministic counterparts of~\ref{eq:minmaxCO}. Namely, each of these heuristics considers a symmetric function $\dd:V\times V\rightarrow\R_+$ and returns the optimal solution of $\min_{x\in \X}\sum_{\{i,j\}\in \EE} x_{ij}\dd_{ij}$. Three such functions $\dd$ are considered:
\begin{description}
 \item[\worst{}:] $\dd_{ij}=\max\limits_{u_i\in \U_i, u_j\in \U_j} \dist{u_i}{u_j}$, as suggested by~\cite{citovsky2017tsp}.
 \item[\hcenter{}:] $\dd_{ij}=\dist{\bc_i}{\bc_j}$, where $\bc_i$ is any \emph{geometric median} of $\U_i$, defined as $$\bc_i\in\argmin\limits_{u\in\Metric}\sum\limits_{u'\in \U_i}\dist{u}{u'}.$$
 \item[\avg{}:] $\dd_{ij}=\frac{1}{\nU{i}\nU{j}}\sum\limits_{u_i\in\U_i,u_j\in\U_j}\dist{u_i}{u_j}$.
\end{description}
We also include in our numerical assessment the conservative approximation depicted in Section~\ref{sec:adr}, and denoted \cons{}. We compare these algorithms on the applications mentioned in the introduction: a subway network design problem, modeled as a Steiner tree problem (STP), and a simple plant location problem (SPL). 
Since the applications involve stars and trees, the separation problems of \exact{} can be solved using the dynamic programming recurrence presented in~\eqref{eq:basicDP}.

The purpose of our experiments is two-fold. First and foremost, we wish to assess the numerical efficiency of the exact solution algorithm in terms of solution times. Second, we measure the approximation ratios obtained by the heuristic algorithms, by comparing the cost of their solutions to the optimal solution costs.

The algorithms have been coded in Julia~\citep{bezanson2012julia}, using JuMP~\citep{DunningHuchetteLubin2017} to interface the mixed integer linear programming (MILP) solver CPLEX. They have been carried out on a processor Intel(R) Core(TM) i7-10510U CPU\@1.80GHz using up to 4 threads in parallel and with a total running time limit of 2 hours. The source code of every algorithm is publicly available at \url{https://github.com/mjposs/locational\_uncertainty}.

\subsection{Steiner tree problem}

We consider the problem of expanding the subway network of a city, modeled as a Euclidean Steiner tree problem. The compulsory points model the future stops of the subway, while the other points model the possible knickpoints of the lines. 
We thus consider an undirected graph $\GG=(\VV,\EE)$ where $T\subseteq \VV$ denotes the set of compulsory vertices; we consider an arbitrary root $r\in T$ and set $T_0=T\setminus\{r\}$. 
Set $\X$ thus contains all trees of $\GG$ that cover the vertices of $T$.
Sets $\U_i\subseteq\R^2$ model the possible locations for the vertices, which we assume to be polyhedral sets, and we assume that the distance $d(u_i,u_j)=\|\u_i-\u_j\|_2$ is the Euclidean distance. 

We consider the classical disaggregated MILP formulation for the problem involving two sets of variables~\citep{magnanti1984network}. 
For each undirected edge $\{i,j\}\in \EE$, binary variable $x_{ij}$ takes value 1 if and only if the edge is used. Then, for each $t\in T_0$ and $e=\{i,j\}\in \EE$, the fractional variable $f^t_{ij}$ decides how much flow related to $t$ is sent on the directed arc $(i,j)$. 
Let $\EE^{bidir}$ be the set of directed edges obtained from $\EE$ by including the two opposite edges $(i,j)$ and $(j,i)$ for each undirected edge $\{i,j\}\in \EE$. Defining the incoming and outgoing stars at node $i$ as $\delta^-(i)=\set{j}{(j,i)\in \EE^{bidir}}$ and $\delta^+(i)=\set{j}{(i,j)\in \EE^{bidir}}$, respectively, and the balance of vertex $i$ as $b^t_i=0$ for $i\in T_0\setminus\{t\}$, $b^t_{r}=-1$ and $b^t_t=1$, we obtain
\begin{align*}
\min & \left(\max_{\u\in \UU} \sum_{\{i,j\}\in \EE}x_{ij}\|u_i-u_j\|_2\right) \\
\mbox{s.t.} &\sum_{(j,i)\in \delta^-(i)} f^t_{ji} - \sum_{(i,j)\in \delta^+(i)} f^t_{ij} = b^t_i,\quad\forall i\in \VV, t\in T_0 \\
& f_{ij}^t+f_{ji}^t \leq x_{ij},\quad \forall \{i,j\}\in \EE, t\in T_0 \\
& f\geq0,\; x\text{ binary} 
\end{align*}
\subsubsection{Instances}

We assess the different solution algorithms on the instances P6E with 100 vertices and 5 terminals (p619, p620, and 621) that are publicly available at \url{http://steinlib.zib.de/testset.php}. Each of these instances has 180 edges. The position of the vertices, denoted $\bar{u}_i$ hereafter, are not available in the data files P6E, so we estimate them using a variant of the MDS-MAP algorithm from~\cite{shang2003localization}. 
Specifically, we apply classical multidimensional scaling (MDS) from the Julia package MultivariateStats (see \url{https://github.com/JuliaStats/MultivariateStats.jl}) to compute the positions $\bar{u}$ from the distances, completing the distance matrix with the shortest path values.
The uncertainty sets $\U_i, i\in \VV$ are then computed randomly based on two parameters: $\diaminst$ that scales the diameter of each set $\U_i$, and $\NU$ the common number of elements of all $\U_i, i\in \VV$. 
To be more precise, we first compute the average distance among pairs of points in $\VV$, $\bar{d}=\sum_{i< j}\frac{\|\bar{u}_i-\bar{u}_j\|}{n(n-1)/2}$. 
For each $i\in\VV$, we then uniformly draw one random value in $\rho_i\in [0,\diaminst \cdot\bar{d}]$ and define the circle $C_i$ of center $\bar{u}_i$ and radius $\rho_i$. Then, we take $\NU$ equidistant points on $C_i$, yielding
$$
\U_i=\left\{\left(\bar{u}_{i1}+\rho_i\cos\left(\frac{2 k\pi}{\NU}\right),\bar{u}_{i2}+\rho_i\sin\left(\frac{2 k\pi}{\NU}\right)\right), k=1,\ldots,\NU\right\}.
$$
\blue{We consider each $\Delta\in\{0.2,0.4,0.6\}$ and $\sigma\in\{4,8,12\}$.}
 Following the above procedure, we create 5 random instances for each P6E instance and choice of parameters, yielding 135 instances in total. 

\subsubsection{Results}

Figure~\ref{fig:STP:times:P6E} reports the average solution times, illustrating the impact of the dimension of the diameters of the uncertainty sets, represented by $\Delta$, and the number of elements in each set, given by $\NU$. The figure illustrates that, unsurprisingly, the three deterministic counterparts are solved much faster than \cons{} and \exact{}. More interesting is the fact that \exact{} is faster than the heuristic algorithm \cons{} when $\Delta$ is small. However, the difficulty of solving \exact{} grows rapidly with the value of $\Delta$. Notice also that three instances based on p621, corresponding to $\Delta=0.6$, could not be solved to exact optimality within two hours, ending with optimality gaps of 1\%, 5\%, and 7\%, respectively. Hence, the value reported on Figure~\ref{fig:STP:times:P6E:Delta} for $\Delta=0.6$ is actually a lower bound for the true average value.

\begin{figure}[ht]
\centering
\subfloat[Varying $\Delta$\label{fig:STP:times:P6E:Delta}]{
\begin{tikzpicture}[scale=0.75]
\begin{axis}[height=5cm,width=6cm,axis x line=bottom, axis y line = left,ymin = 0.0001,
	ylabel={\blue{solution times (seonds)}},
	title={},
    log ticks with fixed point,
 	every axis plot/.append style={ultra thick}]
	
	\addplot [solid, blue, mark=triangle] table[
	x=x, y=center
	]{Steiner/P6E_times_Delta.txt};
	
	\addplot [dotted, red, mark=o, mark options=solid] table[
	x=x, y=worst
	]{Steiner/P6E_times_Delta.txt};
	
	\addplot [dashed, teal, mark=square, mark options=solid] table[
	x=x, y=avg
	]{Steiner/P6E_times_Delta.txt};
	
	\addplot [solid, black, mark=triangle] table[
	x=x, y=cons
	]{Steiner/P6E_times_Delta.txt};
	
	\addplot [dotted, brown, mark=o, mark options=solid] table[
	x=x, y=exact
	]{Steiner/P6E_times_Delta.txt};
\end{axis}
\end{tikzpicture}
}
\subfloat[Varying $\sigma$]{
\begin{tikzpicture}[scale=0.75]
\begin{axis}[height=5cm,width=6cm,axis x line=bottom, axis y line = left,ymin = 0.0001,
	ylabel={\blue{solution times (seonds)}},
	title={},
    log ticks with fixed point,
 	every axis plot/.append style={ultra thick},
 	legend entries={\hcenter, \worst, \avg, \cons, \exact},
	legend style={at={(axis cs:13,50)}, anchor=south west}]
	
	\addplot [solid, blue, mark=triangle] table[
	x=x, y=center
	]{Steiner/P6E_times_nU.txt};
	
	\addplot [dotted, red, mark=o, mark options=solid] table[
	x=x, y=worst
	]{Steiner/P6E_times_nU.txt};
	
	\addplot [dashed, teal, mark=square, mark options=solid] table[
	x=x, y=avg
	]{Steiner/P6E_times_nU.txt};
	
	\addplot [solid, black, mark=triangle] table[
	x=x, y=cons
	]{Steiner/P6E_times_nU.txt};
	
	\addplot [dotted, brown, mark=o, mark options=solid] table[
	x=x, y=exact
	]{Steiner/P6E_times_nU.txt};
\end{axis}
\end{tikzpicture}
}
\caption{STP: Average solution times in seconds on instances P6E for each algorithm when varying one of the parameters.}
\label{fig:STP:times:P6E}
\end{figure}
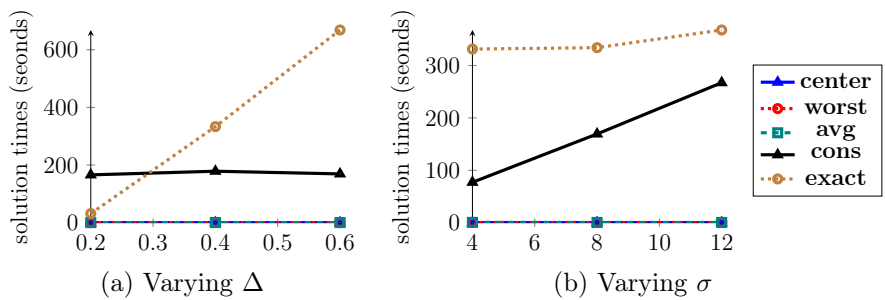

\begin{figure}[ht]
\centering
\subfloat[\scriptsize P6E, $\diaminst=0.2$]{
\begin{tikzpicture}[scale=0.7]
\begin{axis}[height=5cm,width=6cm,axis x line=bottom, axis y line = left,xmin = 0,ymin = 0.001, ymax = 100,
xmax = 50,
	ylabel={\blue{\% of instances}},
	title={},
    log ticks with fixed point,
 	every axis plot/.append style={ultra thick}]
	
	\addplot [solid, blue] table[
	x=x, y=solved, mark=none
	]{Steiner/P6E_center_0.2.txt};
	
	\addplot [dotted, red] table[
	x=x, y=solved, mark=none
	]{Steiner/P6E_worst_0.2.txt};
	
	\addplot [dashed, teal] table[
	x=x, y=solved, mark=none
	]{Steiner/P6E_avg_0.2.txt};
	
	\addplot [solid, black] table[
	x=x, y=solved, mark=none
	]{Steiner/P6E_cons_0.2.txt};
\end{axis}
\end{tikzpicture}
}
\subfloat[\scriptsize P6E, $\diaminst=0.4$]{
\begin{tikzpicture}[scale=0.7]
\begin{axis}[height=5cm,width=6cm,axis x line=bottom, axis y line = left,xmin = 0,ymin = 0.001, ymax = 100,
xmax = 50,
	ylabel={\blue{\% of instances}},
	title={},
    log ticks with fixed point,
every axis plot/.append style={ultra thick}]
	
	\addplot [solid, blue] table[
	x=x, y=solved, mark=none
	]{Steiner/P6E_center_0.4.txt};
	
	\addplot [dotted, red] table[
	x=x, y=solved, mark=none
	]{Steiner/P6E_worst_0.4.txt};
	
	\addplot [dashed, teal] table[
	x=x, y=solved, mark=none
	]{Steiner/P6E_avg_0.4.txt};
	
	\addplot [solid, black] table[
	x=x, y=solved, mark=none
	]{Steiner/P6E_cons_0.4.txt};
\end{axis}
\end{tikzpicture}
}
\subfloat[\scriptsize P6E, $\diaminst=0.6$]{
\begin{tikzpicture}[scale=0.7]
\begin{axis}[height=5cm,width=6cm,axis x line=bottom, axis y line = left,xmin = 0,ymin = 0.001, ymax = 100,
xmax = 50,
	ylabel={\blue{\% of instances}},
	title={},
    log ticks with fixed point,
    legend entries={\hcenter,\worst,\avg,\cons},
 	legend style={at={(axis cs:40,25)}, anchor=south west},
	every axis plot/.append style={ultra thick}]
	
	\addplot [solid, blue] table[
	x=x, y=solved, mark=none
	]{Steiner/P6E_center_0.6.txt};
	
	\addplot [dotted, red] table[
	x=x, y=solved, mark=none
	]{Steiner/P6E_worst_0.6.txt};
	
	\addplot [dashed, teal] table[
	x=x, y=solved, mark=none
	]{Steiner/P6E_avg_0.6.txt};
	
	\addplot [solid, black] table[
	x=x, y=solved, mark=none
	]{Steiner/P6E_cons_0.6.txt};
\end{axis}
\end{tikzpicture}
}
\caption{STP: \% of instances for which the additional relative cost is less than $x$.}
\label{fig:STP:costs}
\end{figure}
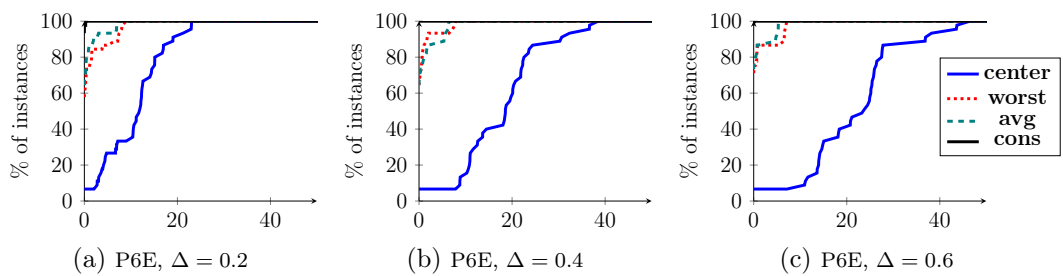


Figure~\ref{fig:STP:costs} reports the cumulative distributions of the cost increase for each of the four heuristic algorithms, relatively to the cost of the exact solution. Formally, let $z(H)$ denote the cost of the solution returned by $H\in\{\hcenter{},\worst{},\avg{},\cons{}\}$ and $z^*$ denote the optimal solution cost. 
For each $H\in\{\hcenter{},\worst{},\avg{},\cons{}\}$, the corresponding curve reports
\begin{equation}
 \label{eq:costcurves}
g(x) = 100\frac{\#\{\mbox{instances for which $z(H)\leq (1+x)\cdot z^*$}\}}{\#\{\mbox{all instances}\}}.
\end{equation}
These results show that \worst{}, \avg{} and \cons{} provide solutions with values very close to the optimal one, with \cons{} being the best of the three, always leading to the optimal solution. In contrast, the quality of \hcenter{} becomes rather poor as $\diaminst$ increases, ranging up to an extra cost 50\% for some of these instances, and with nearly half of the instances with $\diaminst=0.6$ having an extra cost of at least 20\%. 

\begin{figure}
 \centering
\subfloat[\scriptsize \exact{}]{
\includegraphics[width=8cm]{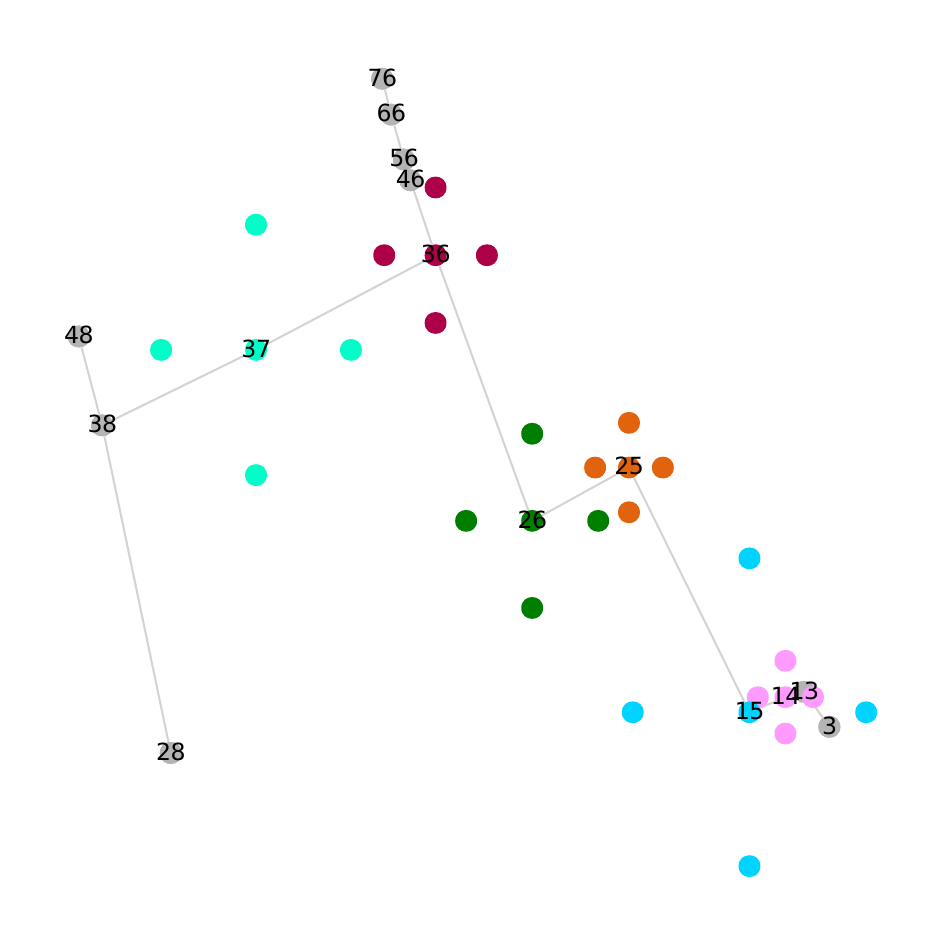}
}
\subfloat[\scriptsize \hcenter{}]{
\includegraphics[width=8cm]{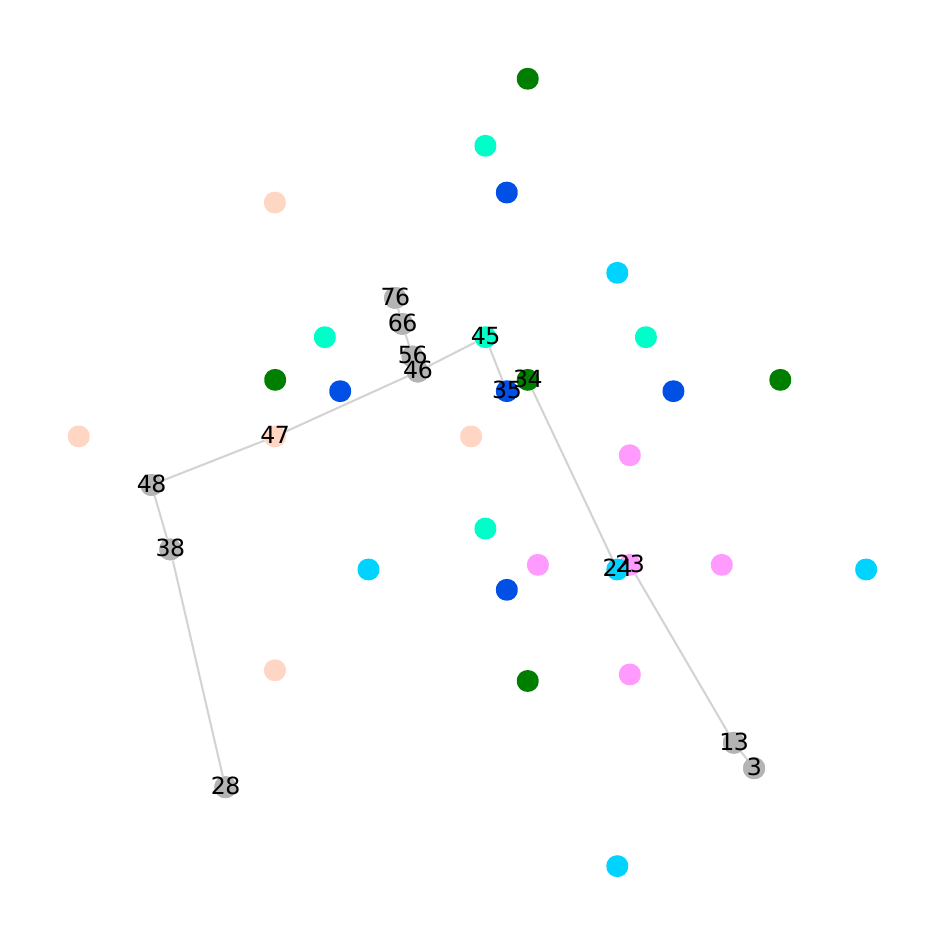}
}
\caption{\blue{Optimal solutions for instance p620 with $\Delta=0.6$ and $\nU{}=4$, for which $z(\hcenter{})/z^*=1.46$. Grey vertices are common to both solutions, while colored ones represent nominal positions and each element of $\U_i$ for the vertices that are not shared by the two solutions. We see how \hcenter{} disregards diameters, ending up with additional nodes having large uncertainty sets. More precisely, computing the average diameters of internal nodes for both solutions leads to 1974 and 3350 for \exact{} and \hcenter{}, respectively.}}
\label{fig:exres}
\end{figure}

\blue{For some insight on the poor results of \hcenter{}, one may observe that it is the only approximation that completely neglects the shape (and diameter) of the uncertainty sets. This means that the solution of center may make poor choices when selecting the non-terminal vertices included in the tree. In particular, the solution of \hcenter{} is likely to include non-terminal vertices whose uncertainty sets have much larger diameters than those selected in an optimal solution. We illustrate this by drawing an optimal solution and  the solution returned by \hcenter{} on Figure~\ref{fig:exres}. Fore more details on these aspects we refer to our companion paper~\cite{abs-2206-08187} where we study the approximation ratios of \worst{} and \hcenter{}. Among other results, we show that while \worst{} achieves a constant approximation ratio, the solution of \hcenter{} can be arbitrarily bad.}
\begin{remark}
 The optimality of the solutions returned by \cons{} that is displayed on Figure~\ref{fig:STP:costs} means that optimizing along function $$c^{\cons}(x)=\min\limits_{\mu}\max\limits_{\u\in\UU}\sum\limits_{\{i,j\}\in \EE} x_{ij}(\dist{u_i}{\mu_{ij}}+\dist{\mu_{ij}}{u_j})$$ returns the same optimal solution as optimizing along the true objective $$c(x)=\max\limits_{\u\in\UU}\sum\limits_{\{i,j\}\in \EE} x_{ij}\dist{u_i}{u_j}.$$
 It does not mean, however, that the conservative approximation (or, equivalently, the affine decision rule approximation, as discussed in Appendix~\ref{app:adr}) is exact in this case. Specifically, looking at the detailed results reveals that $c^{\cons}(x^*)>c(x^*)$ for the optimal solution $x^*$ returned by \cons.
\end{remark}

\subsection{Simple plant location}

We consider a strategic facility location problem where the exact location of the facility may be perturbed due to local political and technical considerations, while the exact position of the clients themselves is subject to uncertainty~\citep{correia2015facility}. The distances between the facilities and the clients are computed from the shortest path distance on a weighted graph that represents the underlying road network. The problem can then be modeled with the weighted graph $\GG=(\VV,\EE,l)$, the vertices of which represent the possible locations for the facilities and clients, while each edge and its weight represent the existence of a road between two vertices together with its length. The metric is induced by graph $\GG$, so $\Metric=\VV$ and $d(u,v)$ is equal to the shortest path between $u$ and $v$ for every $u,v\in V$. 

Let $I\subseteq \VV$ and $J\subseteq \VV$ represent the set of clients and possible locations for the facilities. 
We consider the problem of choosing $p$ facilities among $J$ and assigning every client to its closest facility so as to minimize the total assignment cost. For each $j\in J,$ let $y_j$ be a binary variable indicating whether a facility is located at $j$, and for each $i\in I, j\in J,$ let $x_{ij}\in\{0,1\}$ indicate whether client $i$ is assigned to facility $j$.
The robust problem can then be formulated as
\begin{align*}
\min &\left( \max_{\u\in \UU} \sum_{i\in I,j\in J}x_{ij}d(u_i,u_j) \right) \\
\mbox{s.t.}&\sum_{j\in J} x_{ij} = 1, \quad \forall i\in I \\
&x_{ij} \leq y_{j}, \quad \forall i\in I,j\in J \\
& \sum_{j\in J}y_j = p \\
&x,y\text{ binary}
\end{align*}

\subsubsection{Instances}

We construct the graph $\GG=(\VV,\EE,l)$ as follows. For each vertex $i$, we generate its position $u_i$ uniformly in the square $[0,1]^2$ and we select edges so that the resulting graph is planar and connected and shorter edges are more likely to appear. 
This procedure allows to mimic real transportation networks~\citep{daskin1993genrand2}. 
More precisely, we first compute a minimum cost spanning tree based on the weights $\{w_{ij}=\|u_i-u_j\|_2^{-2}\}$ to ensure the graph is connected. Then, we iteratively select $m-n+1$ additional edges following the probability distribution $Prob_{ij}=\frac{w_{ij}}{\sum_{\{i',j'\}}w_{i'j'}}$ for each $i\neq j \in \VV$ while ensuring the resulting graph is planar. The length $l_{ij}$ of each edge $\{i,j\}\in \EE$ is then given by $\|u_i-u_j\|_2$ and the the distance between every pair of vertices is given by the shortest path between them in $G$. For each $i\in\VV$, we define $\U_i$ as the $\NU$ vertices that are closest to $i$. Finally, $I$ is defined as a random subset of $\VV$ such that $\U_i\cap \U_{i'}=\emptyset$ for each $i,i'\in I$ and $J=\VV\setminus I$. Following the above procedure, we create 2 random instances for each choice of parameters $n, m, \sigma$ and $\card{I}$, leading to 486 instances.

\subsubsection{Results}

Figure~\ref{fig:SPL:times} reports the average solution times, showing that \exact{} is able to solve every instance to optimality within a few seconds, being roughly twice slower than the heuristic algorithms. The figure further underlines that $n$ is the parameter having the strongest impact on the solution time. This was expected given that larger values for $n$ imply more elements in $I$ and $J$, and therefore, larger models. The charts presented for the remaining 4 parameters do not lead to clear conclusions. 

Heuristic \cons{} is not included in the comparison because its efficiency strongly depends on the definition of sets $\Metric_{ij}$, as discussed in Section~\ref{sec:adr}. While defining $\Metric_{ij}=\Metric$ is likely to be intractable, it would lead to the tightest bounds. To obtain a good trade-off between quality and time, one should come up with ad-hoc sets $\Metric_{ij}\subset\Metric$ obtained through heuristics that would be tailored to the specific instances used. This is beyond the scope of the current paper, which aims at proposing general methods rather than ad-hoc algorithms for specific data sets.

Then, following again formula~\eqref{eq:costcurves}, Figure~\ref{fig:SPL:costs} reports the cumulative distributions of the cost increase of each of the three deterministic heuristics, relatively to the cost of the exact solution. 
The results focus only on the parameters having an impact on the resuting costs, namely $n$ and $\NU$. They illustrate that \avg{} is the best approximation, followed closely by \worst{} and \hcenter{}. They also show that \hcenter{} behaves worse for small instances and those having larger uncertainty sets.


Overall, these results illustrate that given the quick solution times of \exact{} due to the compact formulation presented in Section~\ref{sec:compact}, heuristic algorithms do not seem necessary for obtaining good solutions to this problem. Yet, if one wishes to reduce further the solution times, \avg{} should be preferred over \worst{} and \hcenter{}.

\begin{figure}[ht]
\centering
\subfloat[Varying $n$]{
\begin{tikzpicture}[scale=0.7]
\begin{axis}[height=5cm,width=6cm,axis x line=bottom, axis y line = left,ymin = 0,
	ylabel=\blue{{solution times (seonds)}},
	title={},
	xtick = {200,400,600},
 	every axis plot/.append style={ultra thick}]
	
	\addplot [solid, blue, mark=triangle] table[
	x=x, y=center
	]{SPL/times_n.txt};
	
	\addplot [dotted, red, mark=o, mark options=solid] table[
	x=x, y=worst
	]{SPL/times_n.txt};
	
	\addplot [dashed, teal, mark=square, mark options=solid] table[
	x=x, y=avg
	]{SPL/times_n.txt};
	
	\addplot [dotted, brown, mark=o, mark options=solid] table[
	x=x, y=exact
	]{SPL/times_n.txt};
\end{axis}
\end{tikzpicture}
}
\subfloat[Varying $|I|$]{
\begin{tikzpicture}[scale=0.7]
\begin{axis}[height=5cm,width=6cm,axis x line=bottom, axis y line = left,ymin = 0,
	ylabel=\blue{{solution times (seonds)}},
	title={},
	symbolic x coords={n/8,n/7,n/6},
	xtick={n/8,n/7,n/6},
 	every axis plot/.append style={ultra thick}]
	
	\addplot [solid, blue, mark=triangle] table[
	x=x, y=center
	]{SPL/times_I.txt};
	
	\addplot [dotted, red, mark=o, mark options=solid] table[
	x=x, y=worst
	]{SPL/times_I.txt};
	
	\addplot [dashed, teal, mark=square, mark options=solid] table[
	x=x, y=avg
	]{SPL/times_I.txt};
	
	\addplot [dotted, brown, mark=o, mark options=solid] table[
	x=x, y=exact
	]{SPL/times_I.txt};
\end{axis}
\end{tikzpicture}
}
\subfloat[Varying $p$]{
\begin{tikzpicture}[scale=0.7]
\begin{axis}[height=5cm,width=6cm,axis x line=bottom, axis y line = left,ymin = 0,
	ylabel=\blue{{solution times (seonds)}},
	title={},
	xtick= {2,4,6},
    log ticks with fixed point,
 	every axis plot/.append style={ultra thick}]
	
	\addplot [solid, blue, mark=triangle] table[
	x=x, y=center
	]{SPL/times_p.txt};
	
	\addplot [dotted, red, mark=o, mark options=solid] table[
	x=x, y=worst
	]{SPL/times_p.txt};
	
	\addplot [dashed, teal, mark=square, mark options=solid] table[
	x=x, y=avg
	]{SPL/times_p.txt};
	
	\addplot [dotted, brown, mark=o, mark options=solid] table[
	x=x, y=exact
	]{SPL/times_p.txt};
\end{axis}
\end{tikzpicture}
}
\\
\subfloat[Varying $m$]{
\begin{tikzpicture}[scale=0.7]
\begin{axis}[height=5cm,width=6cm,axis x line=bottom, axis y line = left,ymin = 0,
	ylabel=\blue{{solution times (seonds)}},
	title={},
	symbolic x  coords={n,1.5n,2n},
    xtick={n,1.5n,2n},
 	every axis plot/.append style={ultra thick}]
	
	\addplot [solid, blue, mark=triangle] table[
	x=x, y=center
	]{SPL/times_m.txt};
	
	\addplot [dotted, red, mark=o, mark options=solid] table[
	x=x, y=worst
	]{SPL/times_m.txt};
	
	\addplot [dashed, teal, mark=square, mark options=solid] table[
	x=x, y=avg
	]{SPL/times_m.txt};
	
	\addplot [dotted, brown, mark=o, mark options=solid] table[
	x=x, y=exact
	]{SPL/times_m.txt};
\end{axis}
\end{tikzpicture}
}
\subfloat[Varying $\NU$]{
\begin{tikzpicture}[scale=0.7]
\begin{axis}[height=5cm,width=6cm,axis x line=bottom, axis y line = left,ymin = 0,
	ylabel=\blue{{solution times (seonds)}},
	title={},
    xtick={2,3,4},
 	every axis plot/.append style={ultra thick},
 	legend entries={\hcenter, \worst, \avg, \exact},
	legend style={at={(axis cs:4.1,0.5)}, anchor=south west}]
	
	\addplot [solid, blue, mark=triangle] table[
	x=x, y=center
	]{SPL/times_nU.txt};
	
	\addplot [dotted, red, mark=o, mark options=solid] table[
	x=x, y=worst
	]{SPL/times_nU.txt};
	
	\addplot [dashed, teal, mark=square, mark options=solid] table[
	x=x, y=avg
	]{SPL/times_nU.txt};
	
	\addplot [dotted, brown, mark=o, mark options=solid] table[
	x=x, y=exact
	]{SPL/times_nU.txt};
\end{axis}
\end{tikzpicture}
}
\caption{SPL: Average solution times in seconds for each algorithm when varying one of the parameters.}
\label{fig:SPL:times}
\end{figure}
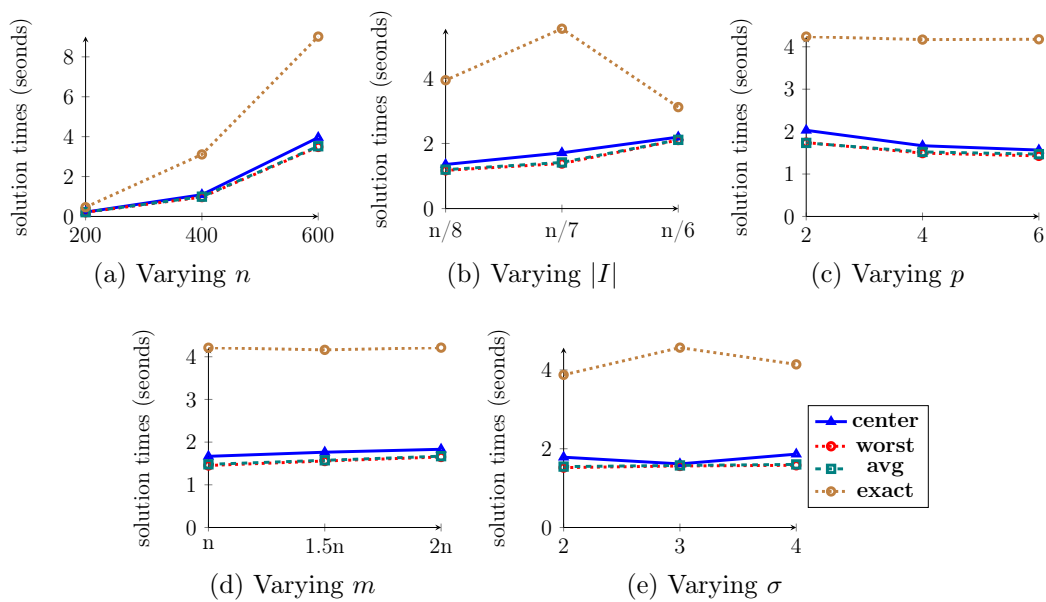
\begin{figure}[ht]
\centering
\foreach \index in {200,400,600}
{
\subfloat[$n=\index$]{
\begin{tikzpicture}[scale=0.7]
\begin{axis}[height=5cm,width=6cm,axis x line=bottom, axis y line = left,xmin = 0,ymin = 0.001, ymax = 100,
	ylabel={\blue{\% of instances}},
	title={},
    log ticks with fixed point,
 	every axis plot/.append style={ultra thick}]
	
	\addplot [solid, blue] table[
	x=x, y=solved, mark=none
	]{SPL/center_n_\index.txt};
	
	\addplot [dotted, red] table[
	x=x, y=solved, mark=none
	]{SPL/worst_n_\index.txt};
	
	\addplot [dashed, teal] table[
	x=x, y=solved, mark=none
	]{SPL/avg_n_\index.txt};
\end{axis}
\end{tikzpicture}
}
}
 \\
 \foreach \index in {2,3}
 {
 \subfloat[$\NU=\index$]{
\begin{tikzpicture}[scale=0.7]
\begin{axis}[height=5cm,width=6cm,axis x line=bottom, axis y line = left,xmin = 0,ymin = 0.001, ymax = 100,
	ylabel={\blue{\% of instances}},
	title={},
    log ticks with fixed point,
 	every axis plot/.append style={ultra thick}
 	]
	
	\addplot [solid, blue] table[
	x=x, y=solved, mark=none
	]{SPL/center_nU_\index.txt};
	
	\addplot [dotted, red] table[
	x=x, y=solved, mark=none
	]{SPL/worst_nU_\index.txt};
	
	\addplot [dashed, teal] table[
	x=x, y=solved, mark=none
	]{SPL/avg_nU_\index.txt};
\end{axis}
\end{tikzpicture}
}}
 \subfloat[$\NU=4$]{
\begin{tikzpicture}[scale=0.7]
\begin{axis}[height=5cm,width=6cm,axis x line=bottom, axis y line = left,xmin = 0,ymin = 0.001, ymax = 100,
	ylabel={\blue{\% of instances}},
	title={},
    log ticks with fixed point,
 	every axis plot/.append style={ultra thick},
    legend entries={\hcenter,\worst,\avg,\cons},
 	legend style={at={(axis cs:16,25)}, anchor=south west}
 	]
	
	\addplot [solid, blue] table[
	x=x, y=solved, mark=none
	]{SPL/center_nU_4.txt};
	
	\addplot [dotted, red] table[
	x=x, y=solved, mark=none
	]{SPL/worst_nU_4.txt};
	
	\addplot [dashed, teal] table[
	x=x, y=solved, mark=none
	]{SPL/avg_nU_4.txt};
\end{axis}
\end{tikzpicture}
}
\caption{SPL: For each heuristic algorithm, the curve plots~\eqref{eq:costcurves}, the cumulative distribution of the \% of instances for which the returned solution has an additional (relative) cost less than the value of the abscissa.}
\label{fig:SPL:costs}
\end{figure}
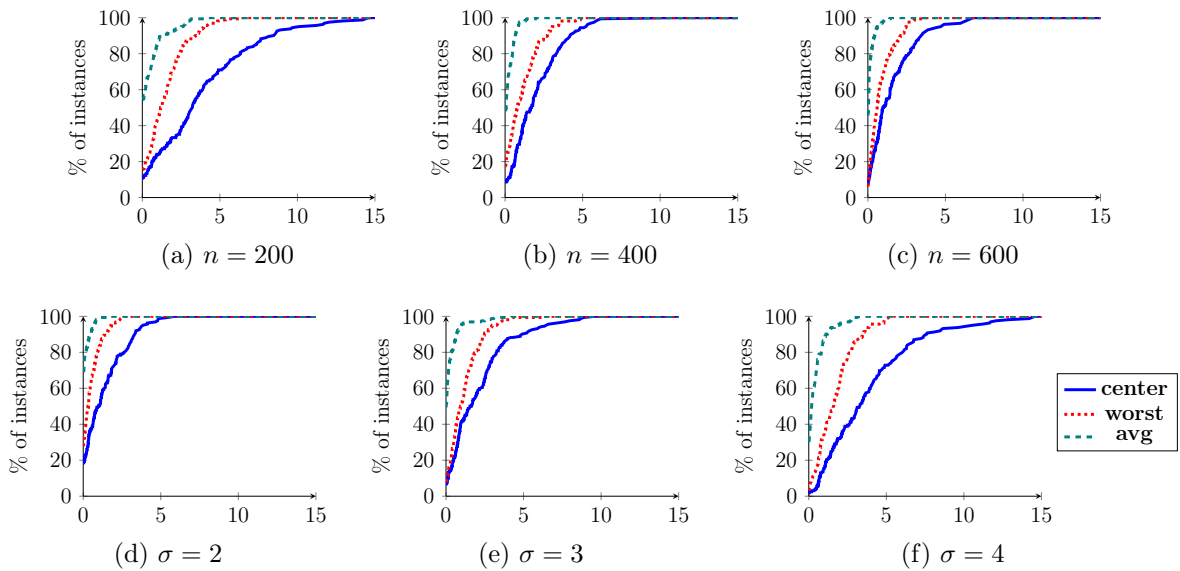

\section{Concluding remarks}

This paper has been devoted to the study of general combinatorial optimization problems defined in spatial graphs with locational uncertainty, thus encompassing applications arising in transportation and facility location, among others. 
After proving the $\NP$-hardness of these problems, we have developed an exact solution algorithm based on scenario generation. The bottleneck of this algorithm lies in the separation problem, so we have studied in depth the complexity of that problem, also proposing an integer programming formulation. We have also proposed a conservative approximation that turns out to be equivalent to the affine decision rules approximation by~\cite{zhen2021robust} in the case of Euclidean distances. We have compared these algorithms numerically to different deterministic approximations on Steiner tree and location instances inspired by the scientific literature.

Our results illustrate that the exact algorithms are fast, being able to solve in reasonable amounts of time instances of realistic sizes. They also illustrate that the deterministic reformulations based on average or worst-case distances provide very good solutions in short amounts of time, offering interesting alternatives whenever an exact solution cannot be computed in an acceptable time. 

\blue{Last, our solution algorithms and reformulations critically rely on the fact that the uncertainty sets are independent for each vertex $i\in \VV$. We believe that extending these to more general (correlated) uncertainty sets would be an interesting topic for future research.}

\bibliographystyle{informs2014} 
\bibliography{ref}

\begin{appendix}

\section{Definitions related to parameterized complexity and treewidth}
\label{app:fptandtw}

\subsection{Tree decompositions and treewidth.}
\label{app:tw}

A \emph{tree decomposition} of a graph $G=(V,E)$ is a pair ${\cal D}=(T,{\cal B})$, where $T$ is a tree
and ${\cal B}=\{X^{w}\mid w\in V[T]\}$ is a collection of subsets of $V$, called \emph{bags},
such that:
\begin{itemize}
\item $\bigcup_{w \in V[T]} X^w = V$,
\item for every edge $\{i,j\} \in E$, there is a $w \in V[T]$ such that $\{i, j\} \subseteq X^w$, and
\item for every $\{x,y,z\} \subseteq V[T]$ such that $z$ lies on the unique path between $x$ and $y$ in $T$,  $X^x \cap X^y \subseteq X^z$.
\end{itemize}
We call the vertices of $T$ {\em vertices} of ${\cal D}$ and the sets in ${\cal B}$ {\em bags} of ${\cal D}$. The \emph{width} of a  tree decomposition ${\cal D}=(T,{\cal B})$ is $\max_{w \in V[T]} |X^w| - 1$.
The \emph{treewidth} of a graph $G$, denoted by $tw(G)$, is the smallest integer $t$ such that there exists a tree decomposition of $G$ of width at most $t$.
Let us now recall the definition of a \emph{nice tree decomposition}, which will make the presentation of the algorithm used to proof Theorem~\ref{thm:tw} much simpler.

Let ${\cal D}=(T,{\cal B})$
be a rooted tree decomposition of $G$ (meaning that $T$ has a special vertex $r$ called the \emph{root}).
As $T$ is rooted, we naturally define an ancestor relation among bags, and say that $X^{w'}$ is a \emph{descendant} of $X^w$ if the vertex set of the unique simple path in $T$ from $r$ to $w'$ contains
$w$. In particular, every vertex $w$ is a descendant of itself.
For every $w \in V[T]$, we define $G^w=G[\bigcup\{X^{w'}\mid X^{w'}\text{ is a descendant of }X^w\text{ in }T\}]$.

Such a rooted decomposition is called a \emph{nice tree decomposition} of $G$ if the following conditions hold:
\begin{itemize}

\item $X^{r} = \emptyset$.
\item Every vertex of $T$ has at most two children in $T$.
\item For every leaf $\ell \in V[T]$, $X^{\ell} = \emptyset$. Each such vertex $\ell$ is called a {\em leaf vertex}.
\item If $w \in V[T]$ has exactly one child $w'$, then either
  \begin{itemize}
  \item $X^w = X^{w'}\cup \{i\}$ for some $i \not \in X^{w'}$.
    Each such vertex is called an \emph{introduce vertex}.
  \item $X^w = X^{w'} \setminus \{i\}$ for some $i \in X^{w'}$.
    Each such vertex is called a \emph{forget vertex}. 
  \end{itemize}
\item If $w \in V[T]$ has exactly two children $w_L$ and $w_R$, then $X^{w} = X^{w_L} = X^{w_R}$. 
 Each such vertex $w$ is called a \emph{join vertex}.
\end{itemize}
We recall that one of the key property of such a nice decomposition is that for any $w \in V[T]$, $X^w$ is a separator of $G$.
This implies in particular that, in a join vertex, there is no edge $\{i,j\}\in G^w$ such that $i \in V[G^{w_L}] \setminus X^w$ and $j \in V[G^{w_R}] \setminus X^w$.

Given a tree decomposition of a graph $G$ of width $t$ and $x$ vertices, it is possible to transform it in polynomial time into a \emph{nice} one of width $t$ and $xt$ vertices~\citep{Klo94}.
Moreover, it is possible (\cite{DBLP:journals/corr/abs-1304-6321}) to compute a tree decomposition of width $tw' = \grandO(tw(G))$ and $\grandO(n)$ vertices in time $\grandO(c^{tw(G)}n)$, where $n = |V|$.
By using these two results, we can compute in time $\grandO(c^{tw(G)}n)$ a nice tree decomposition of width $\grandO(tw(G))$ with $\grandO(tw(G)n)$ vertices.

\subsection{Parameterized complexity}
\label{app:fpt}

We refer the reader to~\cite{DF13,CyganFKLMPPS15} for basic background on parameterized complexity, and we recall here only some basic definitions. A \emph{parameterized problem} is a language $L \subseteq \Sigma^* \times \mathbb{N}$, where $\Sigma$ is some fixed alphabet.  For an instance $I=(x,k) \in \Sigma^* \times \mathbb{N}$, $k$ is called the \emph{parameter}.
Given a classical (non-parameterized) decision problem $L_{c} \subseteq \Sigma^*$ and a function $\kappa: \Sigma^* \rightarrow \mathbb{N}$, we denote by
$L_{c}/\kappa = \{(x,\kappa(x)\} \mid x \in L_{c}\}$ the associated parameterized problem.

A parameterized problem $L$ is \emph{fixed-parameter tractable} (\FPT) if there exists an algorithm $\mathcal{A}$, a computable function $f$, and a constant $c$ such that given an instance $I=(x,k)$, $\mathcal{A}$   (called an \FPT \emph{algorithm}) correctly decides whether $I \in L$ in time bounded by $f(k) \cdot |I|^c$.
For instance, the \textsc{Vertex Cover} problem parameterized by the size of the solution is \FPT.

Within parameterized problems, the $\cal{W}$-hierarchy may be seen as the parameterized equivalent to the class $\NP$ of classical decision problems. Without entering into details (see~\cite{DF13,CyganFKLMPPS15} for the formal definitions), a parameterized problem being \WONEH can be seen as a strong evidence that this problem is \textsl{not} \FPT.
The canonical example of \WONEH problem is \textsc{Independent Set}  parameterized by the size of the solution.

The most common way to transfer \WONEHness is via parameterized reductions.
A \emph{parameterized reduction} from a parameterized problem $L_1$ to a parameterized problem $L_2$ is an algorithm that, given an instance $(x,k)$ of $L_1$, outputs an instance $(x',k')$ of $L_2$
such that
\begin{itemize}
\item $(x,k)$ is a yes-instance of $L_1$ if and only if $(x',k')$ is a yes-instance of $L_2$,
\item $k' \le g(k)$ for some computable function $g$, and
\item the running time is bounded by $f(k)\cdot|x|^{\grandO(1)}$ for some computable function $f$.
\end{itemize}
If $L_1$ is \WONEH and there is a parameterized reduction from $L_1$ to $L_2$, then $L_2$ is \WONEH as well.


\section{Computing the objective function on small treewidth graphs}
\label{app:tw2}

Throughout this section, we consider the graph $G=(V,E)$ and denote by $\u_{|X}$ the vector $\u$ restricted to components $u_i$ such that $i \in X$, for any $X \subseteq V$.

\subsection{Definition of the auxiliary problem}
In this section we consider that we are given a fixed input of \ref{eq:evalc}, and a nice tree decomposition ${\cal D}=(T,{\cal B})$ of $G$.
Given $w \in V[T]$, we denote $\UU^w = \times_{i\in V[G^w]}\U_i$.
Let us define the following maximization problem \ref{eq:CO}. 
An input of \ref{eq:CO} is a pair $(w,f)$ where $w \in V[T]$, and $f$ is a function from $X^w$ to $\Metric$ such that for any $i \in X^w$, $f(i) \in \U_i$.
An output is a vector $\u \in \UU^w$ such that for any $i \in X^w, u_i = f(i)$, which we denote by $\u \tchak (w,f)$.
The objective is to maximize $c(\u,G^w)$. We denote by $\OPT(w,f)$ the optimal value for instance $(w,f)$.
As usual in DP algorithms, to simplify the presentation we will define an algorithm $A$ that given an input $(w,f)$ only computes the value $\OPT(w,f)$.
This algorithm could be easily modified to get an associated optimal solution.

\subsection{Join case}
Let $w$ be a join vertex with children $w^L$ and $w^R$.
Given two vectors $\u^L \in \UU^{w^L}$ and $\u^R \in \UU^{w^R}$, such that for any $i \in X^w, \u^L_i = \u^R_i$,
we define $\u = \u^L \diamond \u^R$ by $u_i = \u^L_i$ for any $i \in V[G^{w^L}]$, and $u_i = \u^R_i$ for any $i \in V[G^{w^R}]$.
Observe that $\u$ is well defined as for $i \in X^w$, $\u^L_i = \u^R_i$.

\begin{lemma}\label{lemma:join1}
  Let $(w,f)$ be an input of \ref{eq:CO} such that $w$ is a join vertex with children $w^L$ and $w^R$.
  For any $\u \in \UU^w$, $\u \tchak (w,f)$ if and only if there exists $\u^L$, $\u^R$ such that the following conditions hold:
  \begin{itemize}
  \item $\u^L \tchak (w^L,f)$
  \item $\u^R \tchak (w^R,f)$
  \item $\u = \u^L \diamond \u^R$
  \end{itemize}
\end{lemma}
\begin{proof} 
\blue{We obtain the $\Rightarrow$ direction} by defining $\u^L=\u_{|V[G^{w_L}]}$ (resp. $\u^R=\u_{|V[G^{w_R}]}$).
In the $\Leftarrow$ direction, observe that $\u^L \diamond \u^R$ is well defined as for any $i \in X^w, \u^L_i = \u^R_i = f(i)$, and
$\u \tchak (w,f)$ is also immediate.
\end{proof}

\begin{lemma}\label{lemma:join2}
   Let $(w,f)$ be an input of \ref{eq:CO} such that $w$ is a join vertex with children $w^L$ and $w^R$. Then, 
  $\OPT(w,f) = \OPT(w^L,f)+\OPT(w^R,f)-d^{(w,f)}$, where $d^{(w,f)} = \sum_{i,j \in X^w, \{i,j\} \in E[G]} d(f(i),f(j))$.
\end{lemma}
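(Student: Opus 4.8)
The plan is to prove the additive decomposition
$c(\u,G^w)=c(\u^L,G^{w^L})+c(\u^R,G^{w^R})-d^{(w,f)}$ whenever $\u=\u^L\diamond\u^R$ with $\u^L\tchak(w^L,f)$ and $\u^R\tchak(w^R,f)$, and then to maximize both sides over all admissible extensions. The crucial feature is that the correction term $d^{(w,f)}$ depends only on $f$ restricted to $X^w$, not on the extensions $\u^L,\u^R$; once this is established the two inner maximizations decouple and Lemma~\ref{lemma:join1} closes the argument.

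First I would analyze the vertex set of $G^w$. Since $w$ is a join vertex, $X^w=X^{w^L}=X^{w^R}$, and the connectivity property of tree decompositions forces $V(G^{w^L})\cap V(G^{w^R})=X^w$: any vertex lying in both a descendant bag of $w^L$ and a descendant bag of $w^R$ must, because the bags containing it form a connected subtree of $T$ passing through $w$, also belong to $X^w$. Combined with $V(G^w)=V(G^{w^L})\cup V(G^{w^R})$, this writes $V(G^w)$ as the disjoint union of $V(G^{w^L})\setminus X^w$, $X^w$, and $V(G^{w^R})\setminus X^w$.

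The main step, and the one needing care, is the edge partition. Using the separator property recalled in Appendix~\ref{app:tw} (no edge of $G^w$ joins $V(G^{w^L})\setminus X^w$ to $V(G^{w^R})\setminus X^w$), I would check that every edge of $G^w$ has both endpoints in $V(G^{w^L})$ or both in $V(G^{w^R})$, so that $E(G^w)=E(G^{w^L})\cup E(G^{w^R})$, and that the overlap $E(G^{w^L})\cap E(G^{w^R})$ is exactly the set of edges of $G$ with both endpoints in $X^w$. Inclusion--exclusion on the defining sum of $c$ then gives
\begin{equation*}
c(\u,G^w)=\sum_{\{i,j\}\in E(G^{w^L})}d(u_i,u_j)+\sum_{\{i,j\}\in E(G^{w^R})}d(u_i,u_j)-\sum_{\substack{\{i,j\}\in E(G)\\ i,j\in X^w}}d(u_i,u_j).
\end{equation*}
For $\u=\u^L\diamond\u^R$, the first two sums are $c(\u^L,G^{w^L})$ and $c(\u^R,G^{w^R})$ by the definition of $\diamond$, while the last sum equals $d^{(w,f)}$ since $u_i=f(i)$ for every $i\in X^w$ when $\u\tchak(w,f)$.

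Finally I would take the maximum. By Lemma~\ref{lemma:join1}, the vectors $\u$ with $\u\tchak(w,f)$ are exactly those of the form $\u^L\diamond\u^R$ with $\u^L\tchak(w^L,f)$ and $\u^R\tchak(w^R,f)$, and since both extensions are pinned to $f$ on $X^w$ they may be chosen independently. As $d^{(w,f)}$ is constant in $\u^L$ and $\u^R$, the maximization splits:
\begin{equation*}
\OPT(w,f)=\max_{\u^L\tchak(w^L,f)}c(\u^L,G^{w^L})+\max_{\u^R\tchak(w^R,f)}c(\u^R,G^{w^R})-d^{(w,f)}=\OPT(w^L,f)+\OPT(w^R,f)-d^{(w,f)},
\end{equation*}
as claimed. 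I expect the edge-partition bookkeeping to be the only delicate part; everything else follows directly from the definitions and from Lemma~\ref{lemma:join1}.
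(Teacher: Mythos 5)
Your proof is correct and follows essentially the same route as the paper's: both rest on Lemma~\ref{lemma:join1} together with the identity $c(\u,G^w)=c(\u^L,G^{w^L})+c(\u^R,G^{w^R})-d^{(w,f)}$, which the paper asserts in one line (``edges inside $X^w$ are counted twice'') and you justify via the explicit edge partition. Your single decoupled-maximization step is just a repackaging of the paper's two inequalities, so there is nothing to add.
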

\begin{proof} 
  Let us start with the $\le$ inequality.
  Let $\u$ such that $c(\u,G^w)=\OPT(w,f)$. Let $\u^L$ and $\u^R$ as defined by Lemma~\ref{lemma:join1}.
  Observe that $c(\u,G^w) = c(\u,G^{w_L})+ c(\u,G^{w_R})-d^{(w,f)}$ as edges inside $X^w$ are counted twice in the first two terms.
  We have $c(\u,G^{w_L}) = c(\u^L,G^{w_L})$, and $c(\u^L,G^{w_L}) \le OPT(w^L,f)$ as $\u^L \tchak (w^L,f)$, and same properties hold for the right side.
  This implies $\OPT(w,f) \le \OPT(w^L,f)+\OPT(w^R,f)-d^{(w,f)}$.

  Let us now turn to the other inequality.
  Let $\u^L$ such that $c(\u^L,G^{w_L})=\OPT(w^L,f)$, $\u^R$ such that $c(\u^R,G^{w_R})=\OPT(w^R,f)$, and $\u = \u^L \diamond \u^R$.
  According to Lemma~\ref{lemma:join1}, $\u \tchak (w,f)$, and again $c(\u,G^w) = c(\u^L,G^{w_L})+ c(\u^R,G^{w_R})-d^{(w,f)}$, implying the desired inequality. 
\end{proof}

We are now ready to define the DP algorithm $A$ in the join case.
Given an input $(w,f)$ of \ref{eq:CO} such that $w$ is a join vertex with children $w^L$ and $w^R$, $A(w,f)$ returns $A(w^L,f)+A(w^R,f)-d^{(w,f)}$.
It \blue{follows} from induction and using Lemma~\ref{lemma:join2} that $A(w,f)=\OPT(w,f)$.


\subsection{Introduce case}
Given any input $(w,f)$ of \ref{eq:CO} and $X \subseteq X^w$, we denote by $f_{|X}$ function $f$ restricted to $X$. The following two lemmas are \blue{easily verified}.
\begin{lemma}\label{lemma:intro1}
  Let $(w,f)$ be an input of \ref{eq:CO} such that $w$ is an introduce vertex with children $w'$.
  Let $i$ be such that $X^w = X^{w'} \cup \{i\}$.
  For any $\u \in \UU^w$, $\u \tchak (w,f)$ if and only if the following conditions hold:
  \begin{itemize}
  \item $\u_{|V[G^{w'}]} \tchak (w',f_{|X^{w'}})$
  \item $u_i = f(i)$
  \end{itemize}
\end{lemma}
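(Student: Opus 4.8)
The plan is to prove both directions of the equivalence by directly unpacking the definition of $\tchak$, after first pinning down how $V(G^w)$ and $X^w$ relate to their counterparts at the child $w'$. Since $w$ is an introduce vertex with child $w'$ and $X^w = X^{w'}\cup\{i\}$, the descendants of $w$ in $T$ are exactly $w$ together with the descendants of $w'$; hence $V(G^w) = X^w \cup V(G^{w'}) = V(G^{w'})\cup\{i\}$, while of course $X^w = X^{w'}\cup\{i\}$. (The sharper fact that $i\notin V(G^{w'})$, which follows from the connectivity of the set of bags containing $i$, is not actually needed for this equivalence, although it will matter when turning the lemma into the recursion for $\OPT$.) These two set identities are the only structural input required.

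For the forward direction I would assume $\u\tchak(w,f)$, that is, $\u\in\UU^w$ and $u_k=f(k)$ for every $k\in X^w$. Since $i\in X^w$, the second required condition $u_i=f(i)$ is immediate. For the first condition, because $V(G^{w'})\subseteq V(G^w)$ the restriction $\u_{|V(G^{w'})}$ lies in $\UU^{w'}$, and for every $k\in X^{w'}\subseteq X^w$ we have $u_k=f(k)=f_{|X^{w'}}(k)$; this is precisely $\u_{|V(G^{w'})}\tchak(w',f_{|X^{w'}})$.

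For the backward direction I would assume $\u\in\UU^w$ satisfies $\u_{|V(G^{w'})}\tchak(w',f_{|X^{w'}})$ and $u_i=f(i)$, and check $u_k=f(k)$ for all $k\in X^w=X^{w'}\cup\{i\}$. For $k\in X^{w'}$ this equality is exactly what $\u_{|V(G^{w'})}\tchak(w',f_{|X^{w'}})$ provides, and for $k=i$ it is the assumed $u_i=f(i)$; together these cover all of $X^w$, so $\u\tchak(w,f)$.

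I do not anticipate any genuine obstacle: once the identities $V(G^w)=V(G^{w'})\cup\{i\}$ and $X^w=X^{w'}\cup\{i\}$ are recorded, the statement is a tautological rewriting of the definition of $\tchak$. The only point demanding a moment's care is to keep straight that $\UU^{w'}$ is indexed by $V(G^{w'})$ whereas $\UU^w$ is indexed by the strictly larger set $V(G^w)$, so that the restriction to $V(G^{w'})$ on one side and the extension by the single value $u_i=f(i)$ on the other match up exactly.
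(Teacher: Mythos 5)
Your proof is correct and matches the paper, which simply declares this lemma immediate: both amount to unpacking the definition of $\tchak$ together with the identities $V(G^w)=V(G^{w'})\cup\{i\}$ and $X^w=X^{w'}\cup\{i\}$. Nothing further is needed.
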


\begin{lemma}\label{lemma:intro2}
  Let $(w,f)$ be an input of \ref{eq:CO} such that $w$ is an introduce vertex with children $w'$.
  Let $i$ be such that $X^w = X^{w'} \cup \{i\}$. Then, 
  $\OPT(w,f) = \OPT(w',f_{|X^{w'}})+d^{(i,w,f)}$, where $d^{(i,w,f)} = \sum_{j \in X^w, \{i,j\} \in E[G]} d(f(i),f(j))$.
\end{lemma}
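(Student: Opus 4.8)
The plan is to reduce everything to a single structural observation about how $G^w$ is obtained from $G^{w'}$, after which the identity on $\OPT$ mirrors the join case already treated in Lemma~\ref{lemma:join2}.

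First I would pin down the vertex and edge sets of $G^w$. Since $i$ is introduced at $w$ and $i\notin X^{w'}$, the connectivity of the bags containing $i$ forces $i$ to be absent from every bag of the subtree rooted at $w'$; hence $i\notin V(G^{w'})$ and $V(G^w)=V(G^{w'})\cup\{i\}$ with $i$ outside $V(G^{w'})$. The crucial point is that the only edges of $G^w$ incident to $i$ join $i$ to vertices of $X^{w'}$. Indeed, $X^{w'}$ is a separator of $G$ isolating $V(G^{w'})\setminus X^{w'}$ from $V(G)\setminus V(G^{w'})$; as $i$ lies in the latter set, any edge $\{i,j\}\in E(G)$ with $j\in V(G^w)=V(G^{w'})\cup\{i\}$ must have $j\in X^{w'}=X^w\setminus\{i\}$. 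Consequently $E(G^w)$ is the disjoint union of $E(G^{w'})$ and $\{\{i,j\} : j\in X^w,\ \{i,j\}\in E(G)\}$.

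Next I would record the resulting cost decomposition. Take any output $\u\tchak(w,f)$ and set $\u'=\u_{|V(G^{w'})}$. Splitting the sum defining $c(\u,G^w)$ along the above partition of $E(G^w)$ gives
\[
c(\u,G^w)=c(\u',G^{w'})+\sum_{j\in X^w,\ \{i,j\}\in E(G)} d(u_i,u_j).
\]
Every endpoint appearing in the second sum lies in $X^w$, where $\u$ coincides with $f$ by definition of $\tchak$; hence $u_i=f(i)$ and $u_j=f(j)$, so the second sum equals exactly $d^{(i,w,f)}$. Thus $c(\u,G^w)=c(\u',G^{w'})+d^{(i,w,f)}$ for every $\u\tchak(w,f)$, the additive constant $d^{(i,w,f)}$ being independent of $\u$.

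Finally I would derive the two inequalities using Lemma~\ref{lemma:intro1}. For ``$\le$'', given $\u$ attaining $\OPT(w,f)$, Lemma~\ref{lemma:intro1} yields $\u'\tchak(w',f_{|X^{w'}})$, so $c(\u',G^{w'})\le\OPT(w',f_{|X^{w'}})$ and the displayed identity gives $\OPT(w,f)\le\OPT(w',f_{|X^{w'}})+d^{(i,w,f)}$. For ``$\ge$'', take $\u'$ attaining $\OPT(w',f_{|X^{w'}})$ and extend it to $\u\in\UU^w$ by setting $u_i=f(i)$ (well defined since $V(G^w)=V(G^{w'})\cup\{i\}$ with $i\notin V(G^{w'})$, and $f(i)\in\U_i$); Lemma~\ref{lemma:intro1} certifies $\u\tchak(w,f)$, and the identity gives $c(\u,G^w)=\OPT(w',f_{|X^{w'}})+d^{(i,w,f)}\le\OPT(w,f)$. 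Combining the two inequalities proves the claim. The only genuinely non-routine step is the structural claim that $i$ has no neighbour in $V(G^{w'})\setminus X^{w'}$, which is precisely where the separator property of nice tree decompositions is needed; everything else is the same bookkeeping as in Lemma~\ref{lemma:join2}.
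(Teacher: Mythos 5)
Your proof is correct, and it simply fills in the details of what the paper declares to be an ``immediate'' lemma (the paper gives no proof, relying on the same edge-decomposition of $G^w$ via the separator property that you make explicit). Your argument matches the style of the paper's written-out join and forget cases, so nothing further is needed.
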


We are now ready to define the DP algorithm $A$ in the introduce case.
Given an input $(w,f)$ of \ref{eq:CO} such that $w$ is an introduce vertex with children $w'$, where $X^w = X^{w'} \cup \{i\}$,
$A(w,f)$ returns $A(w',f_{|X^{w'}})+d^{(i,w,f)}$.
Using Lemma~\ref{lemma:intro2} \blue{and induction, we obtain} that $A(w,f)=\OPT(w,f)$.

\subsection{Forget case}
Let $(w,f)$ be an input of \ref{eq:CO} such that $w$ is a forget vertex with children $w'$.
Let $i$ such that $X^{w'} = X^{w} \cup \{i\}$.
For any $x \in \Metric$, we denote $f^{(i,x)}$ the function from $X^{w'}$ to $\Metric$ such that $f^{(i,x)}(j)=f(j)$ for any $j \neq i$,
and $f^{(i,x)}(i)=x$.
\blue{We obtain the following lemma}.
\begin{lemma}\label{lemma:forget1}
  Let $(w,f)$ be an input of \ref{eq:CO} such that $w$ is a forget vertex with children $w'$.
  Let $i$ be such that $X^{w'} = X^{w} \cup \{i\}$.
  For any $\u \in \UU^w$, $\u \tchak (w,f)$ if and only if $\u \tchak (w',f^{(i,u_i)})$.
\end{lemma}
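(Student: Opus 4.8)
The plan is to recognize that this equivalence is a pure unfolding of the definitions, once one observes that a forget vertex leaves the vertex set of its associated subgraph unchanged. First I would establish the structural identity $V(G^w) = V(G^{w'})$, and hence $\UU^w = \UU^{w'}$. Since $w$ has $w'$ as its unique child, the descendants of $w$ are exactly $w$ together with the descendants of $w'$, so the union of bags defining $G^w$ equals $X^w \cup V(G^{w'})$; as $X^w = X^{w'}\setminus\{i\}\subseteq V(G^{w'})$, this union is simply $V(G^{w'})$. Consequently, both relations range over the very same product set $\UU^w=\UU^{w'}$, so any $\u$ considered on one side is automatically a valid candidate on the other.

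With this in hand, the two membership relations differ only in which coordinates of $\u$ are pinned. By definition, $\u \tchak (w,f)$ holds exactly when $u_j = f(j)$ for every $j \in X^w = X^{w'}\setminus\{i\}$. On the other side, $\u \tchak (w',f^{(i,u_i)})$ holds exactly when $u_j = f^{(i,u_i)}(j)$ for every $j \in X^{w'}$. Splitting this last condition according to whether $j=i$: for $j \neq i$ it reads $u_j = f(j)$, since $f^{(i,u_i)}$ leaves those values untouched, while for $j=i$ it reads $u_i = f^{(i,u_i)}(i) = u_i$, which is vacuously true. Thus the right-hand relation reduces to $u_j = f(j)$ for all $j \in X^{w'}\setminus\{i\} = X^w$, which is precisely the left-hand relation.

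A small point to verify along the way is that $f^{(i,u_i)}$ is a legitimate input of $\Pi$, i.e. $f^{(i,u_i)}(j)\in\U_j$ for all $j\in X^{w'}$: for $j\neq i$ this is inherited from $f$, and for $j=i$ it follows from $\u\in\UU^w$ together with $i\in V(G^w)$, which force $u_i\in\U_i$. There is no genuine obstacle here; the only place requiring care is the structural identity $V(G^w)=V(G^{w'})$, without which the two sides would not even range over the same vectors $\u$. Everything else is a direct comparison of the pinned coordinates, consistent with the paper's remark that the lemma is immediate, and parallel to the treatment of the introduce and join cases in Lemmas~\ref{lemma:intro1} and~\ref{lemma:join1}.
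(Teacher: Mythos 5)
Your proof is correct and matches the paper's intent exactly: the paper states this lemma without proof as ``immediate,'' and your argument is precisely the definitional unfolding that justifies that claim, with the two genuinely relevant observations ($V(G^w)=V(G^{w'})$ so both sides range over the same vectors, and the pinned-coordinate conditions coincide since the constraint at $i$ is vacuous) correctly identified.
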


\begin{lemma}\label{lemma:forget2}
  Let $(w,f)$ be an input of \ref{eq:CO} such that $w$ is a forget vertex with children $w'$.
  Let $i$ be such that $X^{w'} = X^{w} \cup \{i\}$. Then, 
  $\OPT(w,f) = \max_{x \in \U_i}\OPT(w',f^{(i,x)})$.
\end{lemma}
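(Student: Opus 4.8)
The plan is to exploit the fact that a forget vertex changes only the \emph{bag} but not the underlying graph $G^w$, so that the objective function is literally the same for $(w,f)$ and for each $(w',f^{(i,x)})$; the maximization over $x\in\U_i$ then simply re-expresses the freedom of placing node $i$. First I would record the structural observation. Since $w'$ is the unique child of $w$ and $X^w=X^{w'}\setminus\{i\}\subseteq X^{w'}$, the collection of vertices appearing in bags that are descendants of $w$ coincides with those appearing in bags descendant of $w'$ (adding the smaller bag $X^w$ contributes nothing new). Hence $V(G^w)=V(G^{w'})$ and therefore $G^w=G^{w'}$, which gives $\UU^w=\UU^{w'}$ and $c(\u,G^w)=c(\u,G^{w'})$ for every $\u$. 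This identity of objectives is the heart of the argument, and it is the step most easily overlooked, although once stated it is immediate.

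With this in hand I would prove the two inequalities, using Lemma~\ref{lemma:forget1} to translate between the constraints $\u\tchak(w,f)$ and $\u\tchak(w',f^{(i,u_i)})$. For the direction $\OPT(w,f)\le\max_{x\in\U_i}\OPT(w',f^{(i,x)})$, take an optimal $\u^*$ for $(w,f)$ and set $x=u^*_i\in\U_i$. By Lemma~\ref{lemma:forget1} we have $\u^*\tchak(w',f^{(i,x)})$, so
$$
\OPT(w,f)=c(\u^*,G^w)=c(\u^*,G^{w'})\le \OPT(w',f^{(i,x)})\le \max_{x\in\U_i}\OPT(w',f^{(i,x)}).
$$
For the reverse direction, fix any $x\in\U_i$ and let $\u^*$ be optimal for $(w',f^{(i,x)})$; then $u^*_i=f^{(i,x)}(i)=x$, and Lemma~\ref{lemma:forget1} (read right to left) yields $\u^*\tchak(w,f)$. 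Consequently
$$
\OPT(w',f^{(i,x)})=c(\u^*,G^{w'})=c(\u^*,G^w)\le \OPT(w,f),
$$
and taking the maximum over $x\in\U_i$ gives $\max_{x\in\U_i}\OPT(w',f^{(i,x)})\le\OPT(w,f)$. Combining the two inequalities establishes the claimed equality.

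Finally, mirroring the join and introduce cases, I would define the dynamic programming algorithm in the forget case by $A(w,f)=\max_{x\in\U_i}A(w',f^{(i,x)})$, and conclude by induction on the tree decomposition that $A(w,f)=\OPT(w,f)$, invoking the equality just proved. I do not anticipate a genuine obstacle here: the only subtlety is the observation $G^w=G^{w'}$, after which everything reduces to a clean application of Lemma~\ref{lemma:forget1} in both directions.
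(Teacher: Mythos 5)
Your proof is correct and follows essentially the same route as the paper's: observe that $G^w=G^{w'}$ (so the objective is unchanged), then establish both inequalities by applying Lemma~\ref{lemma:forget1} in each direction, choosing $x=u^*_i$ for one side and an arbitrary (or maximizing) $x\in\U_i$ for the other. The only addition is your explicit justification that $V(G^w)=V(G^{w'})$, which the paper leaves implicit.
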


\begin{proof} 
  Observe first that $G^w = G^{w'}$.
  Let us start with the $\le$ inequality.
  Let $\u$ such that $c(\u,G^w)=\OPT(w,f)$. Notice that $c(\u,G^w)=c(\u,G^{w'})$.
  By Lemma~\ref{lemma:forget1}, $\u \tchak (w',f^{(i,u_i)})$, implying $c(\u,G^{w'}) \le \OPT(w',f^{(i,u_i)}) \le \max_{x \in \U_i}\OPT(w',f^{(i,x)})$.
  
  Let us now turn to the other inequality.
  Let $x^* \in \U_i$ maximizing the right side.
  Let $\u$ such that $c(\u,G^{w'})=\OPT(w',f^{(i,x^*)})$. Notice that as $\u \tchak (w',f^{(i,x^*)}), u_i = x^*$, and thus $\u \tchak (w',f^{(i,u_i}))$.
  According to Lemma~\ref{lemma:forget1}, $\u \tchak (w,f)$, implying that $c(\u,G^{w'}) = c(\u,G^{w}) \le \OPT(w,f)$.
\end{proof}

We are now ready to define the DP algorithm $A$ in the forget case.
Given an input $(w,f)$ of \ref{eq:CO} such that $w$ is a forget vertex with children $w'$, where $X^{w'} = X^{w} \cup \{i\}$,
$A(w,f)$ returns $\max_{x \in \U_i}A(w',f^{(i,x)})$.
It \blue{follows} by induction and using Lemma~\ref{lemma:forget2} that $A(w,f)=\OPT(w,f)$.

\subsection{Putting pieces together}

\begin{theorem}
  \EVALC/$tw+\NU$ is \FPT. More precisely, we can compute an optimal solution of \ref{eq:evalc} in time $\grandO(n tw \NU^{\grandO(tw)})$, where $n = |V|$, $tw = tw(G)$,
  and $\NU= \max_{i \in V}\nU{i}$.
\end{theorem}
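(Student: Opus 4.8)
The plan is to assemble the per-case algorithms $A$ defined for the join, introduce, and forget vertices into a single bottom-up pass over a nice tree decomposition and read off $\cost(G)$ at the root. First I would invoke the construction recalled in Appendix~\ref{app:tw}: from $G$ one computes, for some constant $c$, in time $\grandO(c^{tw}n)$ a nice tree decomposition ${\cal D}=(T,{\cal B})$ of width $tw'=\grandO(tw)$ having $\grandO(tw\cdot n)$ bags. The algorithm then processes the vertices $w\in V(T)$ in post-order, and for each $w$ it fills in the value $A(w,f)$ for every admissible function $f\colon X^w\to\Metric$ with $f(i)\in\U_i$, using the recurrence dictated by the type of $w$.

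The correctness proof would be a single structural induction on $T$, establishing $A(w,f)=\OPT(w,f)$ for every input $(w,f)$ of the auxiliary problem $\Pi$. The base case is a leaf vertex $\ell$: there $X^\ell=\emptyset$ and $G^\ell$ has no vertex, so $\OPT(\ell,\emptyset)=0$ and the algorithm returns $0$. The inductive step splits into the three cases already treated: Lemma~\ref{lemma:join2} for join vertices, Lemma~\ref{lemma:intro2} for introduce vertices, and Lemma~\ref{lemma:forget2} for forget vertices, each of which expresses $\OPT(w,\cdot)$ exactly in terms of children optima that the induction hypothesis guarantees are computed correctly. Applying this at the root $r$, where $X^r=\emptyset$ forces $f$ to be the empty function and $G^r=G$, yields $A(r,\emptyset)=\OPT(r,\emptyset)=\max_{\u\in\UU}c(\u,G)=\cost(G)$, the desired output. (Recovering a maximizer rather than only its value is routine by storing back-pointers in each recurrence.)

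For the running time I would count table entries and work per entry. Each bag has at most $tw'+1=\grandO(tw)$ vertices, so the number of admissible functions on it is $\prod_{i\in X^w}\nU{i}\le \NU^{tw'+1}=\NU^{\grandO(tw)}$; multiplying by the $\grandO(tw\cdot n)$ bags gives $\grandO(n\cdot tw\cdot\NU^{\grandO(tw)})$ entries. Each entry costs only polynomial work: a forget vertex maximizes over the $\NU$ choices for the forgotten coordinate, an introduce vertex sums at most $\grandO(tw)$ distances $d(f(i),f(j))$ (each an $\grandO(1)$ lookup in $\drob$), and a join vertex sums the $\grandO(tw^2)$ intra-bag distances, so the work is $\grandO(tw^2+\NU)$ in all cases. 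Folding these polynomial factors into the exponential term yields the claimed bound $\grandO(n\cdot tw\cdot\NU^{\grandO(tw)})$, which is \FPT in the combined parameter $tw+\NU$.

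The work to get right is not any single step but the bookkeeping: checking that the three recurrences glue into a globally correct dynamic program — which reduces to the separator property of $X^w$ already exploited in the case lemmas, guaranteeing that no edge is double-counted or omitted across the $\diamond$-merge at a join and the distance added at an introduce — and confirming that the count of admissible functions per bag is genuinely $\NU^{\grandO(tw)}$ rather than $|\Metric|^{\grandO(tw)}$. This last point is exactly where the parameter $\NU$, and not merely $tw$, enters, consistent with Proposition~\ref{prop:w1tw} forbidding the removal of the dependence on $\NU$.
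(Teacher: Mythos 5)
Your proposal is correct and follows essentially the same route as the paper: compute a nice tree decomposition of width $\grandO(tw)$ with $\grandO(n\cdot tw)$ bags, run the join/introduce/forget recurrences bottom-up (justified by Lemmas~\ref{lemma:join2}, \ref{lemma:intro2}, and~\ref{lemma:forget2}), read off $A(r,\emptyset)=\cost(G)$ at the empty root bag, and bound the table size by $\grandO(n\cdot tw\cdot\NU^{\grandO(tw)})$ with polynomial work per entry. Your accounting of the per-entry cost and your remark that the state space is $\NU^{\grandO(tw)}$ rather than $|\Metric|^{\grandO(tw)}$ match the paper's argument.
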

\begin{proof} 
  Given an input $(\Metric,d,G,\UU)$ of \ref{eq:evalc}, we start (see Appendix~\ref{app:tw}) by computing in time $\grandO(c^{tw(G)}n)$ a nice tree decomposition of width $\grandO(tw(G))$
  with $N = \grandO(n tw(G))$ vertices.
  Remember that this nice tree decomposition is rooted on a vertex $r$ such that $X^r=\emptyset$.
  Then, we output $A(r,\emptyset)$. Notice that as $X^r = \emptyset$, the second parameter (the function from $X^r$ to $\Metric$) is defined nowhere and denoted $\emptyset$.
   As $A$ solves \ref{eq:CO} optimaly, we have $A(r,\emptyset)=\OPT(r,\emptyset)$. Moreover, as $G^r = G$, we have $\OPT(r,\emptyset)=c(G)$.

   Let us now consider the running time of $A$. Given a tree decomposition with $N$ vertices (in the tree of bags) and of width $t$,
   the size of the DP table is $\grandO(N \NU^t)$,  the time to compute one entry is dominated by the forget case where the branching is in $\grandO(\NU)$,
   implying a running time in $\grandO(N \NU^{t+1})$. Pluging the corresponding values, we get the claimed running time.
\end{proof}

\section{Compact formulation for the Steiner Tree Problem}
\label{app:compactSTP}

We extend next the construction from Section~\ref{sec:compact} to trees, albeit this involves logical constraints. We consider more particularly the case of the Steiner Tree Problem where a set of terminals $T\subseteq V$ is given and any feasible solution is a Steiner tree connecting the terminals of $T$. We further assume that $r$ is a given arbitrary root in $T$ and that any $x\in\X$ describes an directed tree from $r$ to the set of terminals $T\setminus \{r\}$. In particular, this involves that the edges are directed, so variables $x_{ij}$ and $x_{ji}$ now denote the two directed edges $(i,j)$ and $(j,i)$ obtained from $\{i,j\}$, leading to the directed set of edges $\EE^{bidir}$. Similarly, we introduce the incoming and outgoing stars of $i$ as $\delta^-(i)=\set{j}{(j,i)\in \EE^{bidir}}$ and $\delta^+(i)=\set{j}{(i,j)\in \EE^{bidir}}$, \blue{respectively}.

Then, extending the optimization variable $z$ to any node in $V$, we obtain the following formulation
\begin{align}
\min\quad & \omega \nonumber\\
\mbox{s.t.}\quad & \omega \geq z_{\root}^k, \quad \forall k\in [\nU{\root}]\\
& z_{i}^k \geq \sum_{j \in \delta^+(i)} x_{ij} \max_{\ell\in[\nU{j}]}\left(\dist{u_i^k}{u_j^\ell}+z_{j}^\ell\right), \quad \forall i \in V, k\in [\nU{j}]  \label{eq:v1}\\
& x \in \X, z\geq 0.
\end{align}
To linearize the maxima in the right-hand-side of~\eqref{eq:v1}, we introduce variables $Z_{ij}^{k}$ such that
$$
Z_{ij}^{k} \geq \dist{u_i^k}{u_j^\ell}+z_{j}^\ell, \quad \forall \ell\in[\nU{j}]
$$
and replace~\eqref{eq:v1} with
\begin{equation}
\label{eq:zxZ}
z_{i}^k \geq \sum_{j \in \delta^+(i)} x_{ij} Z_{ij}^{k}, \quad \forall i \in V\setminus T_0, k,\ell\in [\nU{j}].
\end{equation}
The right-hand-side of constraints~\eqref{eq:zxZ} can be further linearized with the help of additional variables $X_{ij}^{k}$ and logical constraints
$$
x_{ij}=1\implies X_{ij}^{k}\geq Z_{ij}^{k}.
$$

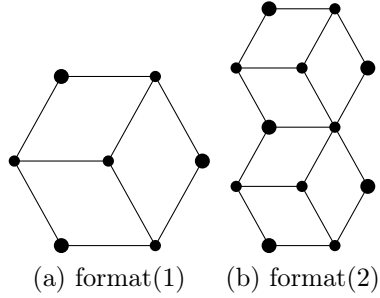
\begin{figure}
\centering
\subfloat[format(1)]{
\begin{tikzpicture}[scale=0.5]
\tkzDefPoint(2.75,0.25){u1}
\tkzDefPoint(5.25,0.25){u2}
\tkzDefPoint(1.5,2.5){u3}
\tkzDefPoint(4,2.5){u4}
\tkzDefPoint(6.5,2.5){u5}
\tkzDefPoint(2.75,4.75){u6}
\tkzDefPoint(5.25,4.75){u7}
\foreach \n in {u1,u5,u6}
  \node at (\n)[circle,fill,inner sep=2pt]{};
\foreach \n in {u2,u3,u4,u7}
  \node at (\n)[circle,fill,inner sep=1.5pt]{};
\foreach \v/\w in {u1/u2, u1/u3, u2/u4, u3/u4, u2/u5, u3/u6, u4/u7, u5/u7, u6/u7}
    \draw[-] (\v) to (\w);
\end{tikzpicture}
}
\subfloat[format(2)]{
\begin{tikzpicture}[scale=0.35]
\tkzDefPoint(2.75,0.25){u1}
\tkzDefPoint(5.25,0.25){u2}
\tkzDefPoint(1.5,2.5){u3}
\tkzDefPoint(4,2.5){u4}
\tkzDefPoint(6.5,2.5){u5}
\tkzDefPoint(2.75,4.75){u6}
\tkzDefPoint(5.25,4.75){u7}

\tkzDefPoint(1.5,7){u8}
\tkzDefPoint(4,7){u9}
\tkzDefPoint(6.5,7){u10}
\tkzDefPoint(2.75,9.25){u11}
\tkzDefPoint(5.25,9.25){u12}

\foreach \n in {u1,u5,u6,u10,u11}
  \node at (\n)[circle,fill,inner sep=2pt]{};
\foreach \n in {u2,u3,u4,u7,u8,u9,u12}
  \node at (\n)[circle,fill,inner sep=1.5pt]{};
\foreach \v/\w in {u1/u2, u1/u3, u2/u4, u3/u4, u2/u5, u3/u6, u4/u7, u5/u7, u6/u7, u6/u8, u7/u9, u7/u10, u8/u9, u8/u11, u10/u12, u11/u12, u9/u12}
    \draw[-] (\v) to (\w);
\end{tikzpicture}
}
\caption{Small instances inspired by the format instance from SteinLib, $T$ contains the larger nodes.\label{fig:small}}
\end{figure}

We illustrate and compare the above formulation on small artificial instances built upon the \emph{format} instance which includes 7 vertices and 9 edges (the instance is available at \url{http://steinlib.zib.de/format.php}). To get larger instances from the \emph{format} instance, we remove the central terminal and add layered copies of the instance. 
Figure~\ref{fig:small} depicts the original structure of the \emph{format} instance and that obtained by adding one copy. We denote as format$(\kappa)$ the instance with $\kappa$ copies of the original graph. 

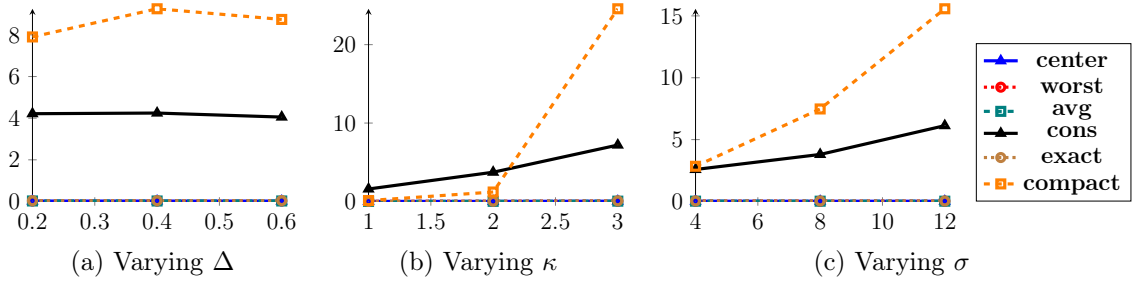
\begin{figure}[ht]
\centering
\subfloat[Varying $\Delta$]{
\begin{tikzpicture}[scale=0.75]
\begin{axis}[height=5cm,width=6cm,axis x line=bottom, axis y line = left,ymin = 0.0001,
	ylabel={},
	title={},
    log ticks with fixed point,
 	every axis plot/.append style={ultra thick}]
	
	\addplot [solid, blue, mark=triangle] table[
	x=x, y=center
	]{Steiner/small_times_Delta.txt};
	
	\addplot [dotted, red, mark=o, mark options=solid] table[
	x=x, y=worst
	]{Steiner/small_times_Delta.txt};
	
	\addplot [dashed, teal, mark=square, mark options=solid] table[
	x=x, y=avg
	]{Steiner/small_times_Delta.txt};
	
	\addplot [solid, black, mark=triangle] table[
	x=x, y=cons
	]{Steiner/small_times_Delta.txt};
	
	\addplot [dotted, brown, mark=o, mark options=solid] table[
	x=x, y=exact
	]{Steiner/small_times_Delta.txt};
	
	\addplot [dashed, orange, mark=square, mark options=solid] table[
	x=x, y=compact
	]{Steiner/small_times_Delta.txt};
\end{axis}
\end{tikzpicture}
}
\subfloat[Varying $\kappa$]{
\begin{tikzpicture}[scale=0.75]
\begin{axis}[height=5cm,width=6cm,axis x line=bottom, axis y line = left,ymin = 0.0001,
	ylabel={},
	title={},
    log ticks with fixed point,
 	every axis plot/.append style={ultra thick}]
	
	\addplot [solid, blue, mark=triangle] table[
	x=x, y=center
	]{Steiner/small_times_dim.txt};
	
	\addplot [dotted, red, mark=o, mark options=solid] table[
	x=x, y=worst
	]{Steiner/small_times_dim.txt};
	
	\addplot [dashed, teal, mark=square, mark options=solid] table[
	x=x, y=avg
	]{Steiner/small_times_dim.txt};
	
	\addplot [solid, black, mark=triangle] table[
	x=x, y=cons
	]{Steiner/small_times_dim.txt};
	
	\addplot [dotted, brown, mark=o, mark options=solid] table[
	x=x, y=exact
	]{Steiner/small_times_dim.txt};
	
	\addplot [dashed, orange, mark=square, mark options=solid] table[
	x=x, y=compact
	]{Steiner/small_times_dim.txt};
\end{axis}
\end{tikzpicture}
}
\subfloat[Varying $\sigma$]{
\begin{tikzpicture}[scale=0.75]
\begin{axis}[height=5cm,width=6cm,axis x line=bottom, axis y line = left,ymin = 0.0001,
	ylabel={},
	title={},
    log ticks with fixed point,
 	every axis plot/.append style={ultra thick},
 	legend entries={\hcenter, \worst, \avg, \cons, \exact, \compact},
	legend style={at={(axis cs:13,0.0001)}, anchor=south west}]
	
	\addplot [solid, blue, mark=triangle] table[
	x=x, y=center
	]{Steiner/small_times_nU.txt};
	
	\addplot [dotted, red, mark=o, mark options=solid] table[
	x=x, y=worst
	]{Steiner/small_times_nU.txt};
	
	\addplot [dashed, teal, mark=square, mark options=solid] table[
	x=x, y=avg
	]{Steiner/small_times_nU.txt};
	
	\addplot [solid, black, mark=triangle] table[
	x=x, y=cons
	]{Steiner/small_times_nU.txt};
	
	\addplot [dotted, brown, mark=o, mark options=solid] table[
	x=x, y=exact
	]{Steiner/small_times_nU.txt};
	
	\addplot [dashed, orange, mark=square, mark options=solid] table[
	x=x, y=compact
	]{Steiner/small_times_nU.txt};
\end{axis}
\end{tikzpicture}
}
\caption{STP: Average solution times in seconds on instances format($\kappa$) for each algorithm when varying one of the parameters.}
\label{fig:STP:times:small}
\end{figure}

The results presented on Figure~\ref{fig:STP:times:small} underline that \compact{} can hardly solve large instances, as the solution times increase significantly with $\kappa$. They also illustrate that \cons{} is much slower than \exact{} on these small artificial instances.

\section{Connection with the affine decision rules approximation from~\cite{zhen2021robust}}
\label{app:adr}

Notice first that, due to the convexity of the norm, the constraint 
$$
\forall \u\in\UU:  \sum_{\{i,j\}\in \EE} x_{ij} \|u_i-u_j\|_2 \leq \omega
$$
is equivalent to
$$
\forall \u\in\conv(\UU):  \sum_{\{i,j\}\in \EE} x_{ij} \|u_i-u_j\|_2 \leq \omega.
$$
Next, let us denote the unit ball of dimension $p$ by $\W$, as well as $\WW=\times_{e\in \EE}\W$. Let us also direct arbitrarily every edge in $\EE$, leading to the set of directed edges $\Edir$. Following the same idea as~\cite[Theorem~1]{zhen2021robust}, we obtain that the constraint 
$$
\forall \u\in\conv(\UU):  \sum_{(i,j)\in \Edir} x_{ij} \|u_i-u_j\|_2 \leq \omega
$$ 
is equivalent to
\begin{align}
&\forall \u\in\conv(\UU): \sum_{(i,j)\in \Edir}x_{ij} \max_{w_{ij}\in \W}w_{ij}^T(u_i-u_j) \leq \omega \\
\Leftrightarrow &\forall \w\in\WW, \u\in\conv(\UU): \sum_{(i,j)\in \Edir}x_{ij} w_{ij}^T(u_i-u_j) \leq \omega \\
\Leftrightarrow &\forall \w\in\WW : \max\set{\sum_{(i,j)\in \Edir}x_{ij} w_{ij}^T\left(\sum_{k=1}^\nU{i}\lambda_i^ku_i^k-\sum_{\ell=1}^\nU{j}\lambda_j^\ell u_j^\ell\right)}{\sum_{k=1}^\nU{i}\lambda_i^k=1, \forall i\in \VV,\, \lambda\geq0 } \leq \omega \\
\Leftrightarrow &\forall \w\in\WW : \min\set{\sum_{i\in \VV}\mu_i}{\mu_i \geq \left(\sum_{(i,j)\in \Edir} x_{ij}w_{ij}^T-\sum_{(j,i)\in \Edir} x_{ji}w_{ji}^T\right)u_i^k, \forall i\in \VV,k\in[\nU{i}]} \leq \omega. \label{eq:dualization}
\end{align}

Observe that the left-hand side of~\eqref{eq:dualization} can be interpreted as a two-stage robust optimization problem without first-stage variables, with $\mu$ playing the role of the second-stage variables, and with $w$ representing the uncertain parameters. This type of models being notoriously difficult to solve to optimality, we follow~\cite[Lemma~1]{zhen2021robust} and seek a heuristic solution by considering second-stage variables $\mu$ that can be expressed as affine decision rules
\begin{equation}
\label{eq:adr}
 \mu_i(\w)=\mu_i^0+\sum_{(i',j')\in \Edir} \mu_{i,i'j'}^T w_{i'j'},
\end{equation}
where $\mu_i^0\in\R$ and $\mu_{i,i'j'}\in\R^p$. Replacing~\eqref{eq:epigraphicL2} by~\eqref{eq:dualization} with $\mu$ substituted with the right-hand side of~\eqref{eq:adr}, we obtain
\begin{align*}
\min\quad & \omega \nonumber\\
\mbox{s.t.}\quad & \omega \geq \sum_{i\in \VV}\left(\mu_i^0+\sum_{(i',j')\in \Edir} \mu_{i,i'j'}^T w_{i'j'}\right), \quad \forall\w\in\WW\\
& \mu_i^0+\sum_{(i',j')\in \Edir} \mu_{i,i'j'}^T w_{i'j'} \geq \left(\sum_{(i,j)\in \Edir} x_{ij}w_{ij}^T-\sum_{(j,i)\in \Edir} x_{ji}w_{ji}^T\right)u_i^k, \quad\forall i\in \VV,k\in[\nU{i}], \w\in\WW\\
& x \in \X.\nonumber
\end{align*}
Dualizing the robust counstraints with respect to $w\in \WW$ yields
\begin{align}
\min\quad & \omega \label{eq:adr:first}\\
\mbox{s.t.}\quad & \omega \geq \sum_{i\in \VV}\mu_i^0+\sum_{(i,j)\in \Edir} \left\|\mu_{i,ij}+\mu_{j,ij}+\sum_{i'\neq i,j}\mu_{i',ij}\right\|_2\\
& \mu_i^0 \geq \sum_{(i,j)\in \Edir} \|x_{ij}u_i^k-\mu_{i,ij}\|_2+\sum_{(j,i)\in \Edir} \|x_{ji}u_i^k+\mu_{i,ji}\|_2 + \sum_{(i',j')\in \Edir:i\neq i',j'} \|\mu_{i,i'j'}\|_2, \quad\forall i\in \VV,k\in[\nU{i}]\label{eq:robustreformulationmu0}\\
& x \in \X.\label{eq:adr:last}
\end{align}
We discuss next how we can substantially reduce the number of affine multipliers in~\eqref{eq:adr}, and consequently, in problem~\eqref{eq:adr:first}--\eqref{eq:adr:last}. Let $(\tomega,\tx,\tmu^0,\tmu)$ denote an optimal solution to~\eqref{eq:adr:first}--\eqref{eq:adr:last}. Observe that the optimal solution cost is equal to $\tomega=\max_{u\in\UU}\tomega(u)$ where
\begin{align}
\tomega(u)&= \sum_{i\in \VV}\left(
\sum_{(i,j)\in \Edir} \|\tx_{ij}u_i-\tmu_{i,ij}\|_2+\sum_{(j,i)\in \Edir} \|\tx_{ji}u_i+\tmu_{i,ji}\|_2 + \sum_{(i',j')\in \Edir:i\neq i',j'} \|\tmu_{i,i'j'}\|_2
\right)\nonumber\\
&\qquad\qquad\qquad\qquad\qquad\qquad\qquad\qquad\qquad+
\sum_{(i,j)\in \Edir} \left\|\tmu_{i,ij}+\tmu_{j,ij}+\sum_{i'\neq i,j}\tmu_{i',ij}\right\|_2\\
&=\sum_{(i,j)\in \Edir}\left(
\|\tx_{ij}u_i-\tmu_{i,ij}\|_2
+\|\tx_{ij}u_j+\tmu_{j,ij}\|_2
+ \sum_{i'\neq i,j}\|\tmu_{i',ij}\|_2
+ \left\|\tmu_{i,ij}+\tmu_{j,ij}+\sum_{i'\neq i,j}\tmu_{i',ij}\right\|_2
\right)\label{eq:omegatilde}
\end{align}

We are going to define a sequence of two new solutions, $\tmu'$ and $\tmu''$, such that the corresponding values $\tomega'(u)$ and $\tomega''(u)$ are not greater than $\tomega(u)$ for each $u\in\UU$. First, we define $\tmu'$ by setting $\tmu'_{i',ij}=0$ for all $(i,j)\in \Edir$ such that $i'\notin\{i,j\}$ and $\tmu_{i',ij}'=\tmu_{i',ij}$ otherwise, and let $\tomega^{\prime}(u)$ be the right-hand side of~\eqref{eq:omegatilde} with $\mu$ replaced by $\mu'$.
Observe that
\begin{align*}
\sum_{i'\neq i,j}\|\tmu_{i',ij}'\|_2
+ \left\|\tmu_{i,ij}'+\tmu_{j,ij}'+\sum_{i'\neq i,j}\tmu_{i',ij}'\right\|_2
&=
\left\|\tmu_{i,ij}'+\tmu_{j,ij}'\right\|_2 \\
&= \left\|\tmu_{i,ij}+\tmu_{j,ij}\right\|_2 \\
&\leq \sum_{i'\neq i,j}\|\tmu_{i',ij}\|_2
+ \left\|\tmu_{i,ij}+\tmu_{j,ij}+\sum_{i'\neq i,j}\tmu_{i',ij}\right\|_2,
\end{align*}
which implies that 
\begin{equation}
\label{eq:primebetter}
\tomega^{\prime}(u)\leq\tomega(u), 
\end{equation}
for each $u\in\UU$. Second, we define another solution $\tmu''$ such that $\tmu''_{i,ij}=\tmu'_{i,ij}$ and $\tmu''_{j,ij}=-\tmu'_{i,ij}, \forall(i,j) \in\Edir$. 
Then, we denote as $\tomega^{\prime\prime}(u)$ the corresponding right-hand side of~\eqref{eq:omegatilde}. 
Observe that for each $u\in\UU$
\begin{align}
\tomega^{\prime\prime}(u)
&=\sum_{(i,j)\in \Edir}\left(
\|\tx_{ij}u_i-\tmu_{i,ij}''\|_2
+\|\tx_{ij}u_j+\tmu''_{j,ij}\|_2
+ \|\tmu_{i,ij}''+\tmu''_{j,ij}\|_2
\right)\\
&= \sum_{(i,j)\in \Edir}\left(
\|\tx_{ij}u_i-\tmu'_{i,ij}\|_2
+\|\tx_{ij}u_j-\tmu'_{i,ij}\|_2
\right)
\leq
\tomega^{\prime}(u).
\label{eq:primeprimebetter}
\end{align}

From~\eqref{eq:primebetter} and~\eqref{eq:primeprimebetter}, we see that we can set $\tmu_{i',ij}=0$ for all $(i,j)\in \Edir$ such that $i'\notin\{i,j\}$ and $\tmu_{i,ij}=-\tmu_{j,ij}, \forall(i,j)\in \Edir$ without deteriorating the quality of the solution returned by~\eqref{eq:adr:first}--\eqref{eq:adr:last}.
Thus, renaming the variables $\mu_{i,ij}$ as $\mu_{ij}$, formulation~\eqref{eq:adr:first}--\eqref{eq:adr:last} becomes
\begin{align*}
\min\quad & \omega \\
\mbox{s.t.}\quad & \omega \geq \sum_{i\in \VV}\mu_i^0\\
& \mu_i^0 \geq \sum_{(i,j)\in \Edir} \|x_{ij}u_i^k-\mu_{ij}\|_2+\sum_{(j,i)\in \Edir} \|x_{ji}u_i^k-\mu_{ji}\|_2 , \quad\forall i\in \VV,u\in\UU\\
& x \in \X,\label{eq:adr:last}
\end{align*}
and the equivalence with~\eqref{adr:first}--\eqref{adr:last} follows by removing the dummy variable $\omega$, introducing artificial variables to separate the norms into individual second-order cone constraints, and renaming $\mu_i^0$ as $d_i$.
\end{appendix}






\end{document}